\def\be{\begin{equation}}
\def\ee{\end{equation}}
\def\bea{\begin{eqnarray}}
\def\eea{\end{eqnarray}}
\def\bt{\begin{theorem}}
\def\et{\end{theorem}}
\def\bl{\begin{lemma}}
\def\el{\end{lemma}}
\def\br{\begin{remark}}
\def\er{\end{remark}}
\def\bc{\begin{corollary}}
\def\ec{\end{corollary}}
\def\bd{\begin{definition}}
\def\ed{\end{definition}}
\def\eprop{\end{proposition}}
\def\bprop{\begin{proposition}}
\def\a{\alpha}
\def\b{\beta}
\def\l{\lambda}
\def\e{\varepsilon}
\def\r{\rho}
\def\s{\sigma}
\def\cD{\mathcal{D}}
\def\cE{\mathcal{E}}
\def\cS{\mathcal{S}}
\def\cU{\mathcal{U}}
\def\bC{\mathbf{C}}
\def\bR{\mathbf{R}}
\def\bbC{\mathbf{C}}
\def\bbI{\mathbb{I}}
\def\b1{B_{1}^}
\def\ba{\begin{array}}
\def\ea{\end{array}}
\def\ben{\begin{enumerate}}
\def\een{\end{enumerate}}
\def\iy{\infty}
\renewcommand{\L}{\Lambda}
\newtheorem{lemma}{Lemma}[section]
\newtheorem{theorem}[lemma]{Theorem}
\newtheorem{proposition}[lemma]{Proposition}
\newtheorem{corollary}[lemma]{Corollary}
\newtheorem{definition}[lemma]{Definition}
\newcommand{\Id}{\mathbb{Id}}
\newcommand{\K}{\mathcal{APPT}_{\textnormal{sym}}}
\renewcommand{\leq}{\leqslant}
\renewcommand{\geq}{\geqslant}
\DeclareMathOperator{\vol}{\mathrm{vol}}
\DeclareMathOperator{\tr}{\mathrm{tr}}
\newcommand{\R}{\mathbf{R}}
\newcommand{\C}{\mathbf{C}}
\DeclareMathOperator{\E}{\mathbb{E}}
\renewcommand{\P}{\mathbb{P}}
\newcommand{\SEP}{\mathcal{APPT}_0}
\newcommand{\st}{\  : \ }
\begin{document}
\title[The APPT property for random induced states]{
The absolute positive partial transpose property\\ for random
induced states} \keywords{Separable quantum states, positive
partial transpose, PPT criterion, quantum entanglement, volume
induced by partial trace, random matrices.}
\author{Benoit Collins, Ion Nechita and Deping Ye}
\date{}
\begin{abstract}
In this paper, we first obtain an algebraic formula for the moments
of a centered Wishart matrix, and apply it to obtain new convergence results
in the large dimension limit when both parameters of the distribution tend to infinity at
different speeds.

We use this result to investigate APPT (absolute positive partial
transpose) quantum states. We show that the threshold for a
bipartite random induced state on $\C^d=\C^{d_1} \otimes
\C^{d_2}$, obtained by partial tracing a random pure state on
$\C^d \otimes \C^s$, being APPT occurs if the environmental
dimension $s$ is of order $s_0=\min(d_1, d_2)^3 \max(d_1, d_2)$.
That is, when $s \geq Cs_0$, such a random induced state is
APPT with large probability, while such a random states is not
APPT with large probability when $s \leq cs_0 $. Besides, we
compute effectively $C$ and $c$ and show that it is possible to
replace them by the same sharp transition constant when
$\min(d_1, d_2)^{2}\ll d$.

\end{abstract}
\maketitle

\section{Introduction}

Geometry of quantum states strives to understand the
geometric properties of subsets of quantum states, and has
attracted considerable attention, especially in the case of the
large dimension \cite{AS1, AubrunSzarekWerner2011, fhs, Sz1,
SzarekWerner2008,SzarekWerner2011,Deping2009, Deping2010}. The
high dimensional setting is common (and of particular interest) in
Quantum Information Theory whose building blocks, quantum states,
often are objects with huge dimension (for instance, the set of
quantum states on the system $(\bC^3)^{\otimes 8}$ has dimension
43046720). This high dimensional setting indicates the importance
of random constructions which now is a main tool in understanding
the typical behavior of random induced states. To generate random
induced states, one often relies on random matrices. The
connections between Random Matrices and Quantum Information Theory
were pushed forward by Hayden, Leung, and Winter in their studies
of aspects of generic entanglement \cite{Hayden2006}. Together
with tools from Geometric Functional Analysis and Convex
Geometric Analysis, random matrices and random constructions have
led to many important (and even unexpected) results, such as Hastings's disproof of the famous additivity conjecture for the classical capacity of quantum channels \cite{Hasting2009}. Recent
contributions include the studies of the generic properties for
entanglement vs. separability,  and PPT (positive partial
transpose) vs non-PPT \cite{Aubrun2011, AubrunSzarekYe2011,
AubrunSzarekYe2011a, Nechita2011}.

Detecting quantum entanglement, a phenomenon first discovered
in \cite{EPR1} and now being the key ingredient of quantum
algorithms (see \cite{Nie1, Sh1}), is one of the fundamental
problems in Quantum Information Theory. Among those necessary
and/or sufficient conditions for separability and entanglement,
the Peres-Horodecki PPT criterion \cite{Ho1, Pe1}
 is the simplest but the most powerful one. The Peres-Horodecki
 PPT criterion is a necessary condition and is sufficient only for
 the systems $\bC^2\otimes
 \bC^2$ and $\bC^2\otimes \bC^3$ \cite{St1, Wo1}. From the
  computational complexity point of view,
 separability and PPT are quite different: determining separability
 is an NP-hard problem \cite{Gurvits-NPHard}, but determining PPT is easy since
 it only requires
 to verify the eigenvalues of the partial transpose of given states being positive.
 Note that both the separability and PPT are encoded in the spectral properties
 of quantum states in a complicate way. Necessary and/or sufficient conditions on
 determining separability and PPT by just the information of eigenvalues
 (referred to as the absolute separability and absolute PPT in
 literature) could be very useful in Quantum Information Theory
 as it can help to reduce the cost of
 storage spaces and (processing) time. For absolute separability,
 less results are known;
 however, necessary and sufficient conditions for APPT have been found by Hildebrand
 \cite{Hildebrand2007}.
Understanding when a random induced states is APPT is the main motivation of this work.

To that end, we first prove that a properly centered
$d\times d$ Wishart matrix of parameter $s$ has its expected
normalized moments that can be written as a polynomial in the
variables $d$ and $d/s$. The coefficients of this polynomial have
a simple combinatorial interpretation, and some families of
coefficients are known. This algebraic fact has important
consequences in the two parameter asymptotic study of the Wishart
matrix. Indeed, it allows in particular to capture the precise
nature of the behavior of the Wishart matrix in the case where
$d\to\infty$, and in particular in the case where $s/d\to \infty$
too. We summarize our first main result as follows:

\vskip 2mm
\noindent {\bf Theorem A.} {\em Let $W_{d}$ be a
$d\times d$ Wishart matrix of parameter $s$ and let
$Z_{d}=\sqrt{ds}\left(\frac{W_{d}}{ds}-\frac{\Id}{d}\right)$ be
its centered and renormalized version. The moments of $Z_{d}$ are
given by
\begin{equation*}
\mathbb E  \frac{1}{d}\mathrm{tr}\left[Z_{d}\right]^p =
\!\!\!\!\! \sum_{\substack{\alpha\in S_{p}\\\alpha \text{ has no fixed points}}} \!\!\!\!\! d^{-2g(\alpha )}\left(\frac{d}{s}\right)^{|\alpha|-p/2}.
\end{equation*}
Moreover, almost surely as $d\to\infty$ and $s/d\to \infty$,  the extremal eigenvalues of $Z_d$ converge to $\pm 2$.}

Both results (for moments and for extremal eigenvalues) are of
separate interest in random matrix theory, where the single
scaling $d\to \infty$, $s/d \to c >0$ has received a lot of
attention\cite{elkaroui,johnstone,nechita}. The above result is
then applied to estimate the threshold for a random induced
state being APPT vs. non-APPT. We have the following result (we
put $p=\min(d_1,d_2)$):

\vskip 2mm \noindent {\bf Theorem B.} {\em There are effectively
computable absolute constants $c, C>0$, such that,  if $\rho$ is a
bipartite random induced state on $\C^d=\C^{d_1} \otimes
\C^{d_2}$, obtained by partial tracing over $\C^s$ a random
pure state on $\bC^d \otimes \C^s$, then for $d=d_1d_2$ large
enough, one has:
\begin{itemize}
\item[(i)] The quantum state $\rho$ is APPT with very large probability when $s \geq (4+\varepsilon)p^2 d$.
\item[(ii)] The quantum state $\rho$ is {\em not APPT} with very large probability when $s \leq cp^2 d$. If $p^{2}\ll d$, one can take $c=4 - \varepsilon$; when $p^{2}$ is of the order of $d$, $c$ can be computed as described in the proof of Theorem \ref{thm:threshold-p-infty}.
\end{itemize}
 }

The letters $C,c,c_0,...$ denote absolute numerical constants
(independent of anything) whose value may change from place to
place. When $A,B$ are quantities depending on the dimension (and
perhaps some other parameters), the notation $A \lesssim B$ means
that there exists an absolute constant $C>0$ such that the
inequality $A \leq CB$ holds in every dimension. Similarly $A
\simeq B$ means both $A \lesssim B$ and $B \lesssim A$. As usual,
$A\sim B$ means that  $A/B\to 1$ as the dimension (or some other
relevant parameter) tends to $\infty$, while $A = o(B)$ means that
$A/B\to 0$. For a $d \times d$ complex matrix $A \in M_d(\C)$ we denote by $\mathrm{tr}(A)$ its non-normalized trace. In this paper, whenever we deal with a tensor product structure $\C^d = \C^{d_1} \otimes \C^{d_2}$, we put $p=\min(d_1,d_2)$.

The article is organized as follows. Section
\ref{sec:combinatorics-Wishart} is of random matrix theoretic
flavor. Using mostly combinatorics and also a little bit of
analysis and elementary probability we obtain new formulas for the
moments of centered Wishart matrices and estimates on their
extremal eigenvalues. Section \ref{sec:properties-APPT} gathers
some properties about the the APPT property and section
\ref{sec:bounds-APPT} provides the bounds of the threshold
for APPT.

\section{Combinatorics of centered Wishart matrices}\label{sec:combinatorics-Wishart}
In this section, we prove two results, Theorems \ref{thm:moments-centered-Wishart} and \ref{thm:convergence-extreme} about centered and renormalized Wishart matrix. These results are interesting for random matrix theorists and can be considered independently from the rest of the paper.

\label{Combinatorics:Wishart}
\subsection{Preliminaries and notation}
We start by introducing some notation from combinatorics. For an
integer $p$, we denote $[p]=\{1, 2, \ldots, p\}$, and
$[0]=\emptyset$. For a subset $I \subset [p]$, let $\mathcal S_I$
be the set of \emph{permutations} which act on $I$. We shall take
the convention that $\mathcal S_\emptyset=\{\emptyset\}$. The set of permutations without fixed points will de denoted by
$$\mathcal S^o_I = \{\alpha \in \mathcal S_I \, | \, \forall i \in I, \, \alpha(i) \neq i\}.$$
The \emph{length} $|\alpha|$ of a permutation $\alpha \in \mathcal
S_I$ is the minimal number of transpositions needed to decompose
$\alpha$. We put $|\emptyset| = 0$. The notation
$|\cdot|$ is polymorphic, since it is used to denote both the
cardinality of sets and the length of permutations. We shall also
use the notation $\# \alpha$ for the number of cycles of $\alpha$;
the following relation holds $\# \alpha + |\alpha| = |I|$.

For a nonempty subset $I$, we denote by $\gamma_I$ the \emph{full
cycle} in $\mathcal S_I$, with elements of $I$ ordered
increasingly: $\gamma_I = (i_1 \, i_2 \, \cdots \, i_k)$, where
$I=\{i_1 < i_2 < \cdots < i_k\}$. Abusing notation, we define
$|\gamma_\emptyset|=-1$. In this way, the geodesic inequality
$$|\alpha|+|\alpha^{-1}\gamma_I| \geq |\gamma_I|=|I|-1$$
holds for all $I$ and all $\alpha \in \mathcal S_I$. We define the
\emph{genus} of a permutation $\alpha$ as half of the amount by
which the above inequality fails to be an equality
$$g_I(\alpha) = \frac{|\alpha|+|\alpha^{-1}\gamma_I| -
|I|+1}{2}.$$ It is a standard fact in combinatorics that the genus
$g_I(\alpha)$ is a nonnegative integer. For $\alpha \in \mathcal
S_I$, define $\tilde \alpha \in \mathcal S_J$ to be $\alpha$
without its fixed points; in other words, $J=\{i \in I \, | \,
\alpha(i) \neq i\}$ and, for $j \in J$, we have $\tilde \alpha(j)
= \alpha(j)$. It is clear that $|\tilde \alpha|= |\alpha|$.
Moreover, by the following lemma, erasing fixed points leaves the
genus of the permutation unchanged.
\begin{lemma}\label{lem:genus-permutaion-fixed-points} Let $\tilde
\alpha \in \mathcal S_J$ be the permutation $\alpha \in \mathcal
S_I$ with its fixed points removed. Then, $g_J(\tilde \alpha) =
g_I(\alpha)$.
\end{lemma}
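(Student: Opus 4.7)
The plan is to reduce the lemma to removing a single fixed point at a time and then iterate. Induct on the number of fixed points $k = |I| - |J|$. When $k = 0$ the claim is vacuous. When $\alpha$ is the identity on $I$ (so $J = \emptyset$), a direct computation using the convention $|\gamma_\emptyset| = -1$ gives $g_\emptyset(\emptyset) = 0 = g_I(\mathrm{id})$. Otherwise pick any fixed point $i \in I$ of $\alpha$, set $I' = I \setminus \{i\}$, and let $\alpha' \in \mathcal{S}_{I'}$ be the restriction of $\alpha$. It suffices to show $g_{I'}(\alpha') = g_I(\alpha)$, because the fixed-point-free part of $\alpha'$ still equals $\tilde\alpha$, so the inductive hypothesis applies.

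Using the standard identity $|\sigma| = (\text{ground set size}) - \#\sigma$, the genus formula rewrites as $g_I(\alpha) = \tfrac{1}{2}\bigl(|I| - \#\alpha - \#(\alpha^{-1}\gamma_I) + 1\bigr)$. Removing the isolated $1$-cycle $(i)$ from $\alpha$ decreases both $|I|$ and $\#\alpha$ by one, so those contributions cancel, and the whole reduction boils down to the single combinatorial identity
\begin{equation*}
\#\bigl((\alpha')^{-1}\gamma_{I'}\bigr) = \#\bigl(\alpha^{-1}\gamma_I\bigr).
\end{equation*}

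To establish this identity I would carry out explicit cycle surgery. Let $a = \gamma_I^{-1}(i)$ and $b = \gamma_I(i)$; both lie in $I'$. In $\alpha^{-1}\gamma_I$, the element $i$ sits on the arc $a \mapsto i \mapsto \alpha^{-1}(b)$, because $\alpha^{-1}(i) = i$. In $(\alpha')^{-1}\gamma_{I'}$, the successor of $a$ in $\gamma_{I'}$ is now $b$ (we simply skip $i$ in the cyclic order on $I$), so $a \mapsto \alpha^{-1}(b)$ directly. Thus $(\alpha')^{-1}\gamma_{I'}$ is obtained from $\alpha^{-1}\gamma_I$ by splicing the vertex $i$ out of whichever cycle contains it, which shortens that cycle by one but leaves the total number of cycles unchanged, provided $i$ is not itself a fixed point of $\alpha^{-1}\gamma_I$.

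The main obstacle, small but essential, is verifying this last proviso: $(\alpha^{-1}\gamma_I)(i) = \alpha^{-1}(b) = i$ would force $b = \alpha(i) = i$, contradicting $\gamma_I(i) \neq i$ when $|I| \geq 2$. With that checked, the surgery identity holds, plugging into the rewritten genus formula yields $g_{I'}(\alpha') = g_I(\alpha)$, and iterating the single-fixed-point reduction finishes the proof.
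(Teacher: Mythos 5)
Your proof is correct, and it shares the paper's overall skeleton: induct one fixed point at a time, rewrite the genus via $|\sigma|=n-\#\sigma$ so that everything reduces to showing $\#(\alpha^{-1}\gamma_I)$ is unchanged when the fixed point is inserted or deleted. Where you differ is in how that key cycle-count identity is established. The paper (working in the direction of \emph{adding} a fixed point) invokes the interpretation of $\#(\sigma^{-1}\tau)$ as the number of connected components of a bipartite multigraph $G_{\sigma,\tau}$ and argues pictorially, via Figure \ref{fig:genus-cycles}, that attaching the vertices $i,i'$ does not change the component count. You instead perform explicit cycle surgery on $\alpha^{-1}\gamma_I$: you check pointwise that $(\alpha')^{-1}\gamma_{I'}$ is exactly $\alpha^{-1}\gamma_I$ with $i$ spliced out of its cycle, and you verify the one degenerate possibility the picture glosses over, namely that $i$ cannot be a fixed point of $\alpha^{-1}\gamma_I$ (which would have dropped the cycle count). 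Your route is more elementary and self-contained, with the degenerate case and the trivial case $J=\emptyset$ (where the convention $|\gamma_\emptyset|=-1$ enters) handled explicitly; the paper's route buys a reusable graph-theoretic interpretation and a visually transparent argument at the cost of leaving those details to a sketch.
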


\begin{proof} Going in the opposite direction and
proceeding by induction, it suffices to show that whenever we add
a fixed point $i$ to a permutation $\tilde \alpha \in \mathcal
S_J$, its genus remains unchanged. Let us denote by $j_1$ and
$j_2$ the neighboring points in $J$ between which $i$ is inserted:
$j_1 < i < j_2$. Also, we note by $j_1-1$ the predecessor of $j_1$
in $J$ and by $j_2+1$ the successor of $j_2$ in $J$. Using the
number of cycles notation, one needs to show that $\#(\tilde
\alpha^{-1}\gamma_J) = \#(\alpha^{-1}\gamma_I)$.

Given two permutations $\sigma, \tau \in \mathcal S_p$, recall the following  combinatorial interpretation of the number of cycles of $\sigma^{-1}\tau$. Define a multigraph $G_{\sigma, \tau} = (V,E)$ with $2p$ vertices $V=\{1,  \ldots, p, 1', \ldots, p'\}$ and edges
$$E = \{(k, \sigma(k)') \, , \, k \in [p]\} \cup \{(k, \tau(k)') \, , \, k \in [p]\}.$$
Then the number of cycles $\#(\sigma^{-1}\tau)$ equals the number of connected components of $G_{\sigma, \tau}$.

Going back to our setting, it is clear that the graphs $G_{\tilde \alpha, \gamma_J}$ and $G_{\alpha, \gamma_I}$ have the same number of connected components, since adding the extra vertices $i, i'$ does not alter the edge structure of $G_{\tilde \alpha, \gamma_J}$; for a sketch of the argument, see Figure \ref{fig:genus-cycles}.
\end{proof}
\begin{figure}[htbp]
\centering
\includegraphics{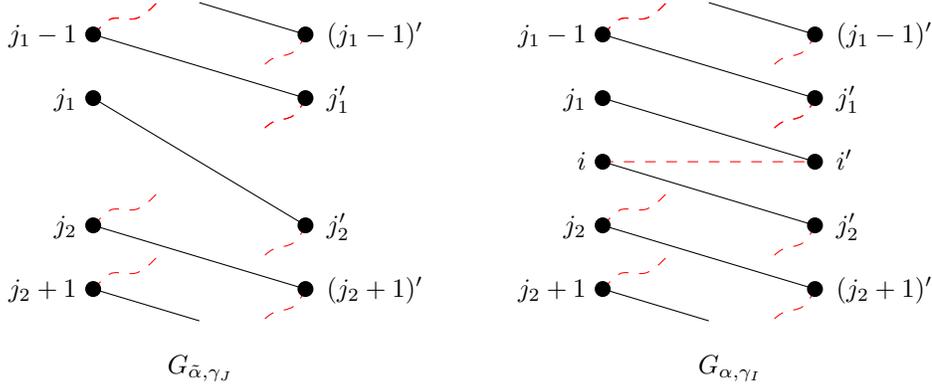}
\caption{Adding a fixed point $i$ to a permutation $\tilde \alpha$ does not increase the number of cycles of $\tilde \alpha^{-1}\gamma_J$. Edges corresponding to $\alpha$ and $\tilde \alpha$ are represented by dashed red lines and edges corresponding to full cycles are black and solid.}
\label{fig:genus-cycles}
\end{figure}

\subsection{A moment formula for the centered Wishart matrix}
\label{sec:moments-centered-Wishart}

Let $G \in M_{d \times s}(\mathbf C)$ be a Ginibre random matrix
(i.e. $\{G_{ij}\}$ are i.i.d. standard complex Gaussian
random variables) and $W= W_{d} = GG^*$ be the
corresponding Wishart matrix of parameters $(d,s)$, where $G^*$ denotes the Hermitian adjoint of $G$. Here we make an abuse of notation and keep track only of the parameter $d$,
considering implicitly that $s$ will be a function of $d$. It is
easy to see that $\mathbb E W_{ij} = s \delta_{ij}$ and $\mathbb E
\mathrm{tr} W = ds$.

The main theorem of this section characterizes the fluctuations of
$W$ around its mean. For this purpose, we introduce the
following $d\times d$ matrix, a properly rescaled, centered
Wishart matrix of parameters $(d,s)$:
$$Z_{d}=\sqrt{ds}\left(\frac{W_{d}}{ds}-\frac{\Id _d}{d}\right),$$
where $\Id _d$ refers to the $d\times d$ identity matrix. We show
that the following theorem holds true.

\begin{theorem}\label{thm:moments-centered-Wishart}
The moments of $Z_d$ are given by
\begin{equation}\label{eq:moments-centered-Wishart}
\mathbb E  \frac{1}{d}\mathrm{tr}\left[Z_{d}\right]^p = \!\!\!\!\!
\sum_{\alpha\in S^o_{p}}  d^{-2g(\alpha
)}\left(\frac{d}{s}\right)^{|\alpha|-p/2}.
\end{equation}
\end{theorem}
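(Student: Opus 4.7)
The plan is to apply the complex Wick formula to the centered Wishart expression $(W_d-s\Id_d)^p$ and then regroup the resulting sum so that the weights line up with the definition of genus.

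First, I write $Z_d=(W_d-s\Id_d)/\sqrt{ds}$ and expand
$$\E\,\tr(Z_d^p)=(ds)^{-p/2}\sum_{i_1,\ldots,i_p\in[d]}\E\prod_{k=1}^p\Bigl((GG^*)_{i_k i_{k+1}}-s\delta_{i_k i_{k+1}}\Bigr),$$
with cyclic indexing $i_{p+1}=i_1$. Expanding the product over subsets $T\subseteq[p]$ (at each position $k$, selecting either the quadratic Gaussian contribution $\sum_{j_k}G_{i_kj_k}\overline{G_{i_{k+1}j_k}}$ or the constant $-s\delta_{i_ki_{k+1}}$) and then applying Wick's formula to the $G$-variables at positions in $T$ produces a sum indexed by pairs $(T,\sigma)$ with $\sigma\in\mathcal S_T$; a standard count of free $i$- and $j$-indices shows each such term equals $(-s)^{p-|T|}s^{\#\sigma}d^{\#(\tau\tilde\sigma)}$, where $\tau=\gamma_{[p]}=(1\,2\,\cdots\,p)$ is the cyclic permutation induced by the trace and $\tilde\sigma\in\mathcal S_p$ extends $\sigma$ by the identity on $[p]\setminus T$.

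Second, I regroup these pairs by their extension $\tilde\sigma$: for a fixed $\tilde\sigma\in\mathcal S_p$ with fixed-point set $F$, the valid preimages correspond to subsets $F'\subseteq F$ via $T=([p]\setminus F)\cup F'$ and $\sigma=\tilde\sigma|_T$. A short bookkeeping shows that $p-|T|=|F|-|F'|$ and $\#\sigma=\#\tilde\sigma-(|F|-|F'|)$, so $(-s)^{p-|T|}s^{\#\sigma}=(-1)^{|F|-|F'|}s^{\#\tilde\sigma}$, and the $F'$-sum collapses to $\sum_{F'\subseteq F}(-1)^{|F|-|F'|}=0^{|F|}$, which vanishes unless $F=\emptyset$. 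I expect this inclusion--exclusion cancellation---natural in hindsight but demanding careful tracking of cycle counts across the expansion---to be the main technical step. The outcome is the clean identity
$$\E\,\tr\bigl[(W_d-s\Id_d)^p\bigr]=\sum_{\tilde\sigma\in\mathcal S^o_p}d^{\#(\tau\tilde\sigma)}s^{\#\tilde\sigma}.$$

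Finally, I reparametrize by $\alpha=\tilde\sigma^{-1}$, which preserves $\mathcal S^o_p$, and use the elementary conjugation identity $\#(\tau\tilde\sigma)=\#(\tilde\sigma\tau)=\#(\alpha^{-1}\gamma)$. Substituting $\#\beta=p-|\beta|$, dividing by $d(ds)^{p/2}$, and writing $s^{p/2-|\alpha|}=d^{p/2-|\alpha|}(d/s)^{|\alpha|-p/2}$, the exponent of $d$ simplifies by inspection to $-(|\alpha|+|\alpha^{-1}\gamma|-p+1)=-2g(\alpha)$, which gives exactly the claimed formula.
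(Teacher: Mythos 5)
Your proof is correct: the index count giving $(-s)^{p-|T|}s^{\#\sigma}d^{\#(\tau\tilde\sigma)}$ is right (the deterministic deltas at $k\notin T$ and the Wick constraints at $k\in T$ combine into $i_k=i_{\tau\tilde\sigma(k)}$ for all $k$, hence $\#(\tau\tilde\sigma)$ free row indices), the fibre description $T=([p]\setminus F)\cup F'$ with the cycle-count bookkeeping is right, and the final exponent arithmetic does give $-2g(\alpha)$. The overall skeleton matches the paper's --- a Wick/genus expansion followed by an inclusion--exclusion over fixed points that kills everything except $\mathcal S^o_p$ --- but your route is organized differently in a way worth noting. The paper expands $\left(\frac{W}{ds}-\frac{\Id}{d}\right)^p$ by the binomial theorem, quotes the standard formula for $\E\,\mathrm{tr}(W^{|I|})$ for each subset $I\subset[p]$ (with genus taken relative to the cycle $\gamma_I$), and therefore needs Lemma \ref{lem:genus-permutaion-fixed-points} to see that the weight attached to $\alpha\in\mathcal S_I$ depends only on its fixed-point-free core, before applying the M\"obius identity $\sum_{J\subset I\subset[p]}(-1)^{p-|I|}=\delta_{J,[p]}$. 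You instead expand entrywise and apply Wick directly to the centered product, so the full cycle $\tau=\gamma_{[p]}$ is kept throughout: the factors $-s\delta_{i_ki_{k+1}}$ enter the row-index constraints exactly as fixed points of the extension $\tilde\sigma$, so the weight $d^{\#(\tau\tilde\sigma)}s^{\#\tilde\sigma}$ visibly depends only on $\tilde\sigma$, and the genus-invariance lemma becomes unnecessary --- its content is absorbed into your identity-extension bookkeeping. The trade-off: the paper's version is modular (it reuses the known Wishart moment formula and isolates a combinatorial lemma of independent interest), while yours is more self-contained and slightly leaner, at the cost of redoing the Wick count by hand.
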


Note that despite the simplicity of this combinatorial result, it seems to be new.
Before we prove this result, we would like to describe three corollaries, obtained by letting one or both parameters $d$ and $s$ go to infinity.

\begin{corollary}
If $d$ is fixed and $s \to \infty$, then
$$\lim_{s \to \infty} \mathbb E  \frac{1}{d}\mathrm{tr}\left[Z_{d}\right]^p =
\begin{cases}
0, & \qquad \text{if $p$ is odd},\\
\sum_{g=0}^\infty \varepsilon(p/2,g)d^{-2g}, & \qquad \text{if $p$ is even},\\
\end{cases}$$
where $\varepsilon(p/2,g)$ is the number of products of $p/2$
disjoint transpositions in $\mathcal S_p$ of genus $g$.
Alternatively, for even $p$, $\varepsilon(p/2,g)$ is known to
count the number of gluings of a $p$-gon into a surface of genus
$g$.
\end{corollary}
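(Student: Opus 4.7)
The plan is to start from the exact moment formula of Theorem \ref{thm:moments-centered-Wishart} and simply take the termwise limit as $s\to\infty$ with $d$ fixed. Since the sum
$$\mathbb E\,\frac{1}{d}\mathrm{tr}\!\left[Z_d\right]^p=\sum_{\alpha\in S^o_p} d^{-2g(\alpha)}\left(\frac{d}{s}\right)^{|\alpha|-p/2}$$
is a finite sum (running over the finite group $S_p$), we may interchange limit and summation with no analytic subtlety. The key issue is therefore combinatorial: identify which exponents $|\alpha|-p/2$ are zero, positive, or negative.

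First I would verify the central inequality: for every fixed-point-free permutation $\alpha\in S^o_p$, one has $|\alpha|\geq p/2$. This follows immediately from the identity $|\alpha|=p-\#\alpha$ together with the observation that each cycle of a fixed-point-free permutation contains at least two elements, giving $\#\alpha\leq p/2$. Moreover, equality $|\alpha|=p/2$ holds precisely when every cycle has length exactly two, i.e.\ when $\alpha$ is a product of $p/2$ disjoint transpositions (a pairing). In particular, for odd $p$ the equality case is empty, so $|\alpha|-p/2\geq 1/2>0$ for all $\alpha\in S^o_p$.

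Armed with this, the two cases of the corollary follow:
\begin{itemize}
\item If $p$ is odd, every exponent $|\alpha|-p/2$ is strictly positive, so each term $(d/s)^{|\alpha|-p/2}\to 0$ as $s\to\infty$, and the full expression tends to $0$.
\item If $p$ is even, the surviving contributions come exactly from pairings, for which $|\alpha|=p/2$ makes $(d/s)^{|\alpha|-p/2}=1$. All other terms have a strictly positive exponent and vanish in the limit. Grouping the surviving pairings according to their genus yields
$$\lim_{s\to\infty}\mathbb E\,\frac{1}{d}\mathrm{tr}\!\left[Z_d\right]^p=\sum_{g\geq 0}\varepsilon(p/2,g)\,d^{-2g},$$
where $\varepsilon(p/2,g)$ by definition counts pairings in $S_p$ of genus $g$; this is a finite sum, as the genus is bounded above for fixed $p$.
\end{itemize}

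There is essentially no hard step; the only thing worth spelling out beyond the inequality $|\alpha|\geq p/2$ is the alternative interpretation of $\varepsilon(p/2,g)$. For this I would appeal to the classical ribbon-graph/polygon-gluing dictionary: a pairing $\alpha$ on the sides of a $p$-gon (whose boundary cycle is $\gamma_{[p]}$) produces an orientable surface whose Euler characteristic is $\#\alpha+\#(\alpha^{-1}\gamma_{[p]})-1$, so that its genus coincides with $g_{[p]}(\alpha)$ as defined in the Preliminaries. This is standard and can simply be cited. The proof then concludes.
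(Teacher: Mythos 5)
Your proof is correct and follows essentially the same route as the paper: take the termwise limit in the exact moment formula, note that fixed-point-free permutations satisfy $|\alpha|\geq p/2$ with equality exactly for products of $p/2$ disjoint transpositions (empty for odd $p$), and regroup the surviving pairings by genus, citing the standard polygon-gluing interpretation. The only difference is that you spell out the inequality $|\alpha|\geq p/2$ and the interchange of limit and finite sum explicitly, which the paper leaves implicit.
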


\begin{proof}
When $s \to \infty$, $d/s \to 0$ the only
terms in equation \eqref{eq:moments-centered-Wishart} which
survive are those for which $|\alpha|=p/2$. It follows that
$\alpha$ must be in this case a product of $p/2$ disjoint
transpositions (for even $p$). Reordering the sum by genera gives
the statement (see \cite{zvo}).
\end{proof}

Note that the sequence $\varepsilon(p,g)$ appears in the Online
Encyclopedia of Integer Sequences \cite{oeis} as A035309.

For the forthcoming corollary, we need to recall that the set
$NC(p)$ is the collection of partitions of $[p]$ that have no
crossings with respect to the canonical order. Moreover, we
introduce the subset
$$NC^o(p) = \{ \pi \in NC(p) \, | \, \pi \text{ has no singletons}\}.$$

\begin{corollary}
If both $d,s\to \infty$ such that $s/d\to c$ for some
constant $c>0$, we obtain
$$\lim_{\substack{d,s \to \infty\\ s/d \to c}} \mathbb E  \frac{1}{d}\mathrm{tr}\left[Z_{d}\right]^p =
\sum_{\pi\in NC^o(p)} c^{\#\pi-p/2},$$
where $\# \pi$ denotes the number of blocks of the partition $\pi$. In particular, the random matrix $Z_{d}$ converges in moments to a \emph{centered} Marchenko-Pastur distribution of parameter $c$ (rescaled by $\sqrt c$).
\end{corollary}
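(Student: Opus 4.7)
The plan is to pass to the limit term by term in the finite sum provided by Theorem~\ref{thm:moments-centered-Wishart}. Fixing $p$, the index set $\mathcal S^o_p$ is finite and independent of $d$ and $s$, so the limit $d,s\to\infty$ with $s/d\to c$ commutes with the summation. In each summand $d^{-2g(\alpha)}(d/s)^{|\alpha|-p/2}$, the factor $(d/s)^{|\alpha|-p/2}$ tends to $c^{p/2-|\alpha|}$, while $d^{-2g(\alpha)}$ tends to $0$ whenever $g(\alpha)\geq 1$ and equals $1$ when $g(\alpha)=0$. Hence only the genus-zero fixed-point-free permutations survive in the limit.

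The next step is to invoke the classical bijection between genus-zero permutations in $\mathcal S_p$ (with respect to the long cycle $\gamma_p$) and non-crossing partitions of $[p]$: to such an $\alpha$ one associates $\pi(\alpha)\in NC(p)$ whose blocks are the supports of the cycles of $\alpha$. Under this bijection $\#\alpha=\#\pi(\alpha)$, and fixed points of $\alpha$ correspond exactly to singleton blocks of $\pi(\alpha)$, so the image of $\{\alpha\in\mathcal S^o_p : g(\alpha)=0\}$ is precisely $NC^o(p)$. Using $|\alpha|=p-\#\alpha$, one has $|\alpha|-p/2 = p/2 - \#\pi$, and the limiting $p$-th moment becomes
\[
\sum_{\pi\in NC^o(p)} c^{\#\pi-p/2},
\]
as claimed.

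To identify the limit with a centered, rescaled Marchenko--Pastur law, I would recall that a free Poisson variable $X$ of rate $\lambda$ has free cumulants $\kappa_n(X)\equiv\lambda$ and $p$-th moment $\sum_{\pi\in NC(p)}\lambda^{\#\pi}$. Centering kills $\kappa_1$ while preserving all higher cumulants, so $X-\lambda$ has moments $\sum_{\pi\in NC^o(p)}\lambda^{\#\pi}$. Since $\kappa_n(\mu Y)=\mu^n\kappa_n(Y)$, choosing $\lambda=c$ and rescaling by $c^{-1/2}$ reproduces the limiting formula above, and convergence of all moments is by definition convergence in moments. No step here is genuinely difficult: the work consists of bookkeeping on the exponents of $d$ and $s$ together with the genus-zero/non-crossing identification, both standard; the only mild subtlety is remembering that the bijection sends cycle counts to block counts and fixed points to singletons, which is exactly what ensures that $\mathcal S^o_p\cap\{g=0\}$ maps bijectively onto $NC^o(p)$.
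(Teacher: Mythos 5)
Your proof is correct and follows essentially the same route as the paper: pass to the limit termwise in the moment formula of Theorem \ref{thm:moments-centered-Wishart}, keep only the genus-zero fixed-point-free permutations, and use Biane's correspondence with non-crossing partitions (your explicit observation that fixed points map to singletons, hence onto $NC^o(p)$, is exactly the needed point). Your free-cumulant computation for the second assertion is just an unpacking of the centered free Poisson limit theorem that the paper cites.
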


\begin{proof} In this asymptotic regime, the surviving
terms in equation \eqref{eq:moments-centered-Wishart} are those
for which $g(\alpha)=0$. The formula in the statement follows from
a well known result of Biane \cite{bia} saying that the permutations in $\mathcal S_p$
of genus $0$ are in one to one correspondence with non-crossing
partitions $\pi \in NC(p)$. The second part follows from a
centered version of the free Poisson limit theorem \cite[Theorem
12.11]{nsp}. For $c=1$, the rescaled quantities appearing in the
statement are the Riordan numbers $R_p$ \cite[sequence A005043
]{oeis} such that $R_p = |NC^o(p)|$; see \cite{npe} for
the connection between Riordan numbers and centered free Poisson
random variables.
\end{proof}

Finally, we have the following general asymptotics

\begin{corollary}\label{cor:Zd-semicircular}
If $d \to \infty$ and $s/d \to \infty$ (in other words $1 \ll d \ll s$), then
$$\lim_{\substack{d\to \infty\\ s/d \to \infty}} \mathbb E  \frac{1}{d}\mathrm{tr}\left[Z_{d}\right]^p =
\begin{cases}
0, & \qquad \text{if $p$ is odd},\\
\mathrm{Cat}_{p/2}, & \qquad \text{if $p$ is even},\\
\end{cases}$$
where $\mathrm{Cat}_n$ is the $n$-th Catalan number. In particular,
the random matrix $Z_{d}$ converges in moments
to a standard semicircular distribution.
\end{corollary}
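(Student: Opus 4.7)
The plan is to read off the limit directly from the moment formula of Theorem \ref{thm:moments-centered-Wishart}. In the regime $d\to\infty$ with $s/d\to\infty$, both small parameters $d^{-1}$ and $d/s$ tend to $0$, so each summand
$$d^{-2g(\alpha)}\left(\frac{d}{s}\right)^{|\alpha|-p/2}$$
is at most $1$, with strict decay unless the two exponents vanish simultaneously. Since the sum over $\alpha\in\mathcal S^o_p$ is finite (bounded by $p!$ terms), we may pass to the limit term by term, keeping only those $\alpha$ with $g(\alpha)=0$ \emph{and} $|\alpha|=p/2$.

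The second step is to identify which fixed-point-free permutations satisfy both equalities. For $\alpha\in\mathcal S^o_p$ with cycles of lengths $k_1,\dots,k_r$ (each $k_i\geq 2$), we have $|\alpha|=\sum(k_i-1)=p-r$, so $|\alpha|=p/2$ forces $r=p/2$ and $k_i=2$ for all $i$; in particular $p$ must be even. Thus the condition $|\alpha|=p/2$ selects exactly the involutions consisting of $p/2$ disjoint transpositions, i.e.\ pair partitions of $[p]$. For odd $p$ there are no such permutations and the limit is $0$, giving the first case of the statement.

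For even $p$, we then impose $g(\alpha)=0$ on top of being a pairing. By the well-known Biane bijection (invoked already in the previous corollary) the genus-zero elements of $\mathcal S_p$ are in one-to-one correspondence with noncrossing partitions $\pi\in NC(p)$, where the blocks of $\pi$ are the cycles of $\alpha$. Restricted to pairings, this identifies the surviving $\alpha$ with noncrossing pair partitions $NC_2(p)$, whose cardinality is the Catalan number $\mathrm{Cat}_{p/2}$. This yields the claimed limit, and since $\mathrm{Cat}_n$ is the $2n$-th moment of the standard semicircular distribution on $[-2,2]$, the convergence in moments follows. No part of the argument is really an obstacle; the only thing to be mildly careful about is ensuring that no cancellations between the two decays $d^{-2g(\alpha)}$ and $(d/s)^{|\alpha|-p/2}$ could produce an unexpected contribution, which is ruled out by the fact that the exponents are nonnegative integers so the sum is dominated by a convergent geometric-type bound.
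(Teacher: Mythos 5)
Your proposal is correct and follows essentially the same route as the paper: both read off the moment formula of Theorem \ref{thm:moments-centered-Wishart}, observe that only terms with $|\alpha|=p/2$ (pairings, forcing $p$ even) and $g(\alpha)=0$ (non-crossing) survive in the double limit, and count these by $\mathrm{Cat}_{p/2}$. The only nitpick is your closing remark that the exponents are ``nonnegative integers'': for odd $p$ the exponent $|\alpha|-p/2$ is a positive half-integer, but since it is still strictly positive the terms vanish in the limit and the argument stands.
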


\begin{proof}
This follows directly from the fact that
the Catalan numbers count the number of non-crossing pairings of
$[2p]$: $\varepsilon(2p,0) = \mathrm{Cat}_p$.
\end{proof}

Note that the above result also follows from the more general
result by \cite{bayi}, where the almost sure convergence is also
obtained. A proof of the almost sure convergence could
also be obtained in the combinatorial spirit of this paper, for
instance along the lines of \cite{coneaap}.

We would like to explain briefly why this result is not surprising
from a heuristic point of view. Indeed, the distribution for
Marcenko-Pastur distribution, as given in Proposition
\ref{prop:RMT-Wishart}, is
$$\pi_c=\frac{\sqrt{4c-(x-1-c)^2}}{2\pi x}\mathbf{1}_{[(\sqrt c - 1)^2,(\sqrt c +1)^2]}(x)\,dx,$$
so as $c\to\infty$,
 the distribution of $Z_{d}$ should approach
$$\sqrt{s/d}\frac{\sqrt{(2+\sqrt{d/s}-x)(x-2+\sqrt{d/s})}}{2\pi (x+\sqrt{s/d})}
\mathbf{1}_{[-2+\sqrt{d/s},2+\sqrt{d/s}]}(x)\,dx$$
which should tend to the semi circle distribution.
Our corollary therefore implies that we let $d,s/d$ go to infinity separately or together with any correlation we like.

Let us now prove the combinatorial result.

\begin{proof}[Proof of Theorem \ref{thm:moments-centered-Wishart}.]
The starting point is a
formula for the moments of $W$, obtained via the Wick calculus
(for an intuitive graphical approach to this problem, see
\cite{coneaap}):
$$\mathbb E \mathrm{tr}(W^p) = \sum_{\alpha \in \mathcal S_p}
d^{\#(\alpha^{-1}\gamma)}s^{\#\alpha} = (ds)^p\sum_{\alpha \in
\mathcal S_p}d^{-|\alpha^{-1}\gamma|}s^{-|\alpha|},$$ where
$\gamma \in \mathcal S_p$ is the forward cycle
$\gamma=(1\,2\,\cdots \,p)$. By applying the binomial formula for
the commuting matrices $W$ and $\Id _d$, we get
\begin{align*}
m_p&:=\mathbb E  \frac{1}{d}\mathrm{tr}\left[\sqrt{ds}
\left(\frac{W}{ds}-\frac{\Id _d}{d}\right)\right]^p \\
& = d^{-1+p/2}s^{p/2}\sum_{I \subset [p]}(-1)^{|I^c|}(ds)^{|I|}
d^{-|I^c|}\mathbb E \mathrm{tr}(W^{|I|})\\
& = \sum_{I \subset [p]}\sum_{\alpha \in \mathcal S_I}
(-1)^{p-|I|}d^{-1-p/2+|I|-|\alpha^{-1}\gamma|}s^{p/2-|\alpha|}\\
& = \sum_{I \subset [p]}\sum_{\alpha \in \mathcal S_I}(-1)^{p-|I|}
d^{-2g_I(\alpha)}\left(\frac{d}{s}\right)^{|\alpha|-p/2}.
\end{align*}
To conclude, we need to show that the terms in the sum above
cancel out, except for the ones with $I=[p]$ and $\alpha \in
\mathcal S_{[p]}$ without fixed points. For a permutation $\tilde
\alpha \in \mathcal S^o_J$, denote
by $[\tilde \alpha]$ the set of permutations which extend $\tilde
\alpha$ by adding fixed points:
$$[\tilde \alpha] = \{\alpha \in \mathcal S_I \, | \, J
\subset I \text{ , } \alpha(j) = \tilde \alpha(j) \quad \forall
j \in J \text{ and } \alpha(i) = i \quad \forall i \in I \setminus J\}.$$
Regrouping terms in the sum and using the fact that for $\alpha
\in [\tilde \alpha]$, $|\alpha|=|\tilde \alpha|$ and
$g_I(\alpha)=g_J(\tilde \alpha)$, we can write
$$m_p = \sum_{\tilde \alpha \in \mathcal S^o}d^{-2g_J(\tilde \alpha)}\left(\frac{d}{s}\right)^{|\tilde \alpha|-p/2}\sum_{\alpha \in \mathcal S_I \cap [\tilde \alpha]}(-1)^{p-|I|},$$
where the first sum in the equation above is indexed by
permutations $\tilde \alpha$ without fixed points. Given such a
permutation $\tilde \alpha \in \mathcal S_J$, for every larger set
$I \supset J$ there is a unique way of extending $\tilde \alpha$
to $\alpha \in \mathcal S_I$. Hence, the second sum in the above
equation is given by
$$\sum_{J \subset I \subset [p]}(-1)^{p-|I|} = \delta_{J, [p]},$$
which can be understood as a M\"obius inversion formula in the poset
of subsets of $[p]$. In conclusion, only the permutations $\tilde \alpha
\in \mathcal S^o_p$ give non-zero contribution.
To finalize the proof, note that such a permutation has at most $p/2$ cycles
and thus at least length $p/2$.
\end{proof}

\subsection{Almost sure convergence of extremal eigenvalues of the
Wishart matrices in the regime $d\ll s$}

We start with the following lemma:
\begin{lemma}\label{moment-bound}
Let us assume that $d/s\to c\in (0,\infty )$ as $d\to\infty$. For
any $\varepsilon >0$ there exists a constant $C>0$ such that for
any $p\leq \sqrt{d}$, we have
$$\E\mathrm{tr} Z_{d}^{p} \leq C(2+\sqrt{c}+\varepsilon)^{p}.$$
\end{lemma}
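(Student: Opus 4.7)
The starting point is Theorem~\ref{thm:moments-centered-Wishart}, which gives
\begin{equation*}
\E\,\mathrm{tr}\,Z_d^p = d\sum_{\alpha\in\mathcal S^o_p}d^{-2g(\alpha)}\left(\frac{d}{s}\right)^{|\alpha|-p/2}.
\end{equation*}
Fix $\delta>0$ with $2+\sqrt{c+\delta}\leq 2+\sqrt c+\varepsilon$. Since $d/s\to c$, we have $d/s\leq c+\delta$ for all sufficiently large $d$, and because every $\alpha\in\mathcal S^o_p$ satisfies $|\alpha|\geq p/2$, we may replace $d/s$ by $c+\delta$ to obtain
\begin{equation*}
\E\,\tfrac{1}{d}\,\mathrm{tr}\,Z_d^p \leq \sum_{\alpha\in\mathcal S^o_p}d^{-2g(\alpha)}(c+\delta)^{|\alpha|-p/2}.
\end{equation*}

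The plan is to decompose this sum by the genus $g(\alpha)$. The leading $g=0$ piece, via the Biane correspondence used in the proof of the Marchenko-Pastur corollary above, equals $\sum_{\pi\in NC^o(p)}(c+\delta)^{p/2-\#\pi}$, which is precisely the $p$-th moment of the centered Marchenko-Pastur distribution of parameter $c+\delta$ rescaled by $1/\sqrt{c+\delta}$. As the heuristic density calculation at the end of Section~\ref{sec:moments-centered-Wishart} shows, this distribution is supported in $[\sqrt{c+\delta}-2,\sqrt{c+\delta}+2]$, hence its $p$-th absolute moment is at most $(2+\sqrt{c+\delta})^p\leq(2+\sqrt c+\varepsilon)^p$.

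For $g\geq 1$ the damping factor $d^{-2g}$ must overwhelm the combinatorial growth of the number of genus-$g$ permutations in $\mathcal S^o_p$. The hard part of the argument is to establish a Soshnikov-type bound of the form
\begin{equation*}
\sum_{\alpha\in\mathcal S^o_p,\ g(\alpha)=g}(c+\delta)^{|\alpha|-p/2}\lesssim p^{Ag}\,(2+\sqrt{c+\delta})^p
\end{equation*}
for some absolute constant $A$; such an estimate can be extracted from the classical Harer-Zagier enumeration of unicellular maps on oriented surfaces, or by a direct inductive argument representing a genus-$g$ permutation as a genus-$0$ one modified by $g$ local ``handle attachments,'' each costing a polynomial-in-$p$ factor. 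Granted this, the $g\geq 1$ contribution is at most $(2+\sqrt{c+\delta})^p\sum_{g\geq 1}(p^A/d^2)^g$; under the hypothesis $p\leq\sqrt d$ the ratio $p^A/d^2$ is $o(1)$ as $d\to\infty$ provided $A<4$, so the geometric series is summable and contributes $O(d^{A/2-2})$.

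The main obstacle is therefore obtaining the correct polynomial dependence in the Soshnikov-type estimate: the naive bound $\#\{\alpha\in\mathcal S^o_p:g(\alpha)=g\}\leq p!$ is far too crude to cooperate with $p\leq\sqrt d$, and one really must exploit the fact that a typical genus-$g$ permutation is a small perturbation of a genus-$0$ one, described by a polynomial (not factorial) amount of data. Once the estimate is in place, combining the $g=0$ and $g\geq 1$ contributions and absorbing the remaining numerical factors into $C$ yields the claimed moment bound.
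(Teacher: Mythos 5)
Your route is genuinely different from the paper's. The paper does not expand the centered moments combinatorially to prove this lemma at all: it quotes Soshnikov's bound $\E\,\mathrm{tr}\,W_{d}^{p}\leq C(\sqrt{c}+1)^{2p}$ for $p\leq\sqrt{d}$ (Theorem 3 of \cite{sos}) for the \emph{uncentered} Wishart matrix, combines it (via Jensen/Markov) with the Katzav--Castillo large-deviation bound \cite{kaca} for the smallest eigenvalue, and converts the resulting tail estimates on the extreme eigenvalues of $W_{d}$ (hence of $Z_{d}$) into the moment bound by integrating $\P(X\geq t)$. Your plan instead runs the genus expansion of Theorem \ref{thm:moments-centered-Wishart} directly; the $g=0$ piece is handled correctly, since it is dominated by the $p$-th moment of the centered Marchenko--Pastur law with the enlarged parameter, whose support lies in $[\sqrt{c+\delta}-2,\sqrt{c+\delta}+2]$.

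However, the step you yourself label ``the hard part'' is a genuine gap rather than a routine citation. The estimate $\sum_{\alpha\in\mathcal S^o_p,\,g(\alpha)=g}(c+\delta)^{|\alpha|-p/2}\lesssim p^{Ag}\,(2+\sqrt{c+\delta})^{p}$, with an exponent $A<4$ and an implied constant uniform in $g$ and in $p\leq\sqrt{d}$, is precisely the higher-genus control whose only available proofs in the sample-covariance setting are Soshnikov-type arguments --- i.e.\ the very theorem the paper invokes as a black box. Harer--Zagier does not deliver it: it enumerates gluings of a single $2n$-gon, that is, the genus expansion over \emph{pairings} (the one-matrix/GUE case), whereas here one must count fixed-point-free permutations of arbitrary cycle type weighted by $(d/s)^{|\alpha|-p/2}$, and the whole difficulty is the uniformity of the constants in $g$ and $p$ (a naive per-genus bound with a constant growing like $p!$ or even $C^{g}p^{Cg}$ with unspecified $C$ would not survive the summation against $d^{-2g}$ for $p$ as large as $\sqrt d$). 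The ``handle attachment'' sketch identifies neither the value of $A$ nor a $g$-independent constant, so the geometric summation over $g\geq 1$ is unsupported; as written, you have replaced the paper's citation of \cite{sos} by an unproved assertion of comparable depth. A minor additional point: your display bounds the \emph{normalized} trace, so the argument yields $\E\,\mathrm{tr}\,Z_{d}^{p}\lesssim d\,(2+\sqrt{c}+\varepsilon)^{p}$; the extra factor $d$ is harmless in the intended application (and the paper's own argument carries the same factor through $\mathrm{tr}\,W^{p}\leq d\,\lambda_{1}^{p}$), but it should be acknowledged when matching the statement of the lemma.
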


\begin{proof} Under the hypotheses of the lemma, the
eigenvalues counting measure of $Z_{d}$ converges almost surely to
a probability measure whose support is
$[-2+\sqrt{c},2+\sqrt{c}]$.

Let us first evaluate $\P(\|Z_{d}\| \geq 2+\sqrt{c})$,
where $\|Z_d\|$ is the operator norm of $Z_d$. According to the
union bound, this is bounded above by $$\P(\lambda_1(Z_d)\geq 2+\sqrt{c})+ \P(\lambda_d (Z_d)\leq
-2-\sqrt{c}),$$ where $\lambda_1(Z_d)$ and
$\lambda_d(Z_d)$ are respectively the largest and smallest eigenvalues of
$Z_d$.

According to Theorem 3 of Soshnikov in \cite{sos}, there exists a constant $C>0$ such that for any $p\leq \sqrt{d}$,
we have
$$\E\mathrm{tr} W_{d}^{p} \leq C(\sqrt{c}+1)^{2p}.$$
By Jensen inequality this implies under the same assumption on $p$ that
$$\P(\lambda_{1}\geq (\sqrt{c}+1)^{2}+t)\leq \frac{(\sqrt{c}+1)^{2p}}{ [(\sqrt{c}+1)^{2}+t]^{p}}.$$

Besides, it follows from Equation (15) in \cite{kaca}, that
the probability of having eigenvalues less than
$t<(\sqrt{c}-1)^{2}$ is less than $\exp (-d g(t))$. For our
purposes it is enough to know that $g(t)>0$ as long as
$0<t<(\sqrt{c}-1)^{2}$.

We can conclude the proof of the lemma from the two above observations
via the inequalities
$$\lambda_{1}^{p}\leq \mathrm{tr} W^{p}\leq n\lambda_{1}^{p},$$
and the formula
$$\E(X)=\int_{0}^{\infty}\P(X\geq t)dt.$$
\end{proof}

Now we would like to let $s/d \to \infty$ simultaneously (but independently) with $d \to \infty$. This is the setting of our next result, namely:
\begin{theorem}\label{thm:convergence-extreme}
Almost surely, when $1 \ll d \ll s$, the extremal eigenvalues of $Z_{d}$ converge to $\pm 2$.
\end{theorem}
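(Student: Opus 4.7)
The plan is to prove $\lambda_1(Z_d) \to 2$ almost surely; the analogous statement $\lambda_d(Z_d) \to -2$ then follows by applying the same scheme to $-Z_d$, whose even moments agree with those of $Z_d$ and whose largest eigenvalue equals $-\lambda_d(Z_d)$. We split into an upper and a lower bound on $\lambda_1(Z_d)$.

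For $\liminf_d \lambda_1(Z_d) \geq 2$ almost surely, I would invoke Corollary \ref{cor:Zd-semicircular}: in the regime $d \to \infty$, $s/d \to \infty$, the expected moments of $Z_d$ converge to those of the standard semicircle supported on $[-2,2]$. A standard Wick-calculus computation of the second moment of $\tfrac{1}{d}\mathrm{tr}(Z_d^p)$ gives the variance bound $\mathrm{Var}(\tfrac{1}{d}\mathrm{tr}\,Z_d^p) = O(d^{-2})$ for each fixed $p$, so by Chebyshev and Borel-Cantelli the empirical spectral distribution of $Z_d$ converges weakly almost surely to the semicircle. Since the semicircle puts positive mass on every open subinterval of $(-2,2)$, for each $\varepsilon > 0$ there must eventually be eigenvalues of $Z_d$ in $(2-\varepsilon, 2)$, which forces $\liminf_d \lambda_1(Z_d) \geq 2 - \varepsilon$ almost surely.

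For $\limsup_d \lambda_1(Z_d) \leq 2$ almost surely, I would apply Markov's inequality to a high even power: for any integer $p \geq 1$,
\begin{equation*}
\P(\lambda_1(Z_d) \geq 2 + \varepsilon) \leq (2+\varepsilon)^{-2p}\,\E\,\mathrm{tr}\,Z_d^{2p}.
\end{equation*}
The strategy is to establish a moment bound of the form $\E\,\mathrm{tr}\,Z_d^{2p} \leq d\,(2+\varepsilon/2)^{2p}$ for all $p$ up to some slowly growing function of $d$, say $p = \lfloor C_\varepsilon \log d\rfloor$. Given such a bound, the right-hand side is dominated by $d \cdot \bigl((2+\varepsilon/2)/(2+\varepsilon)\bigr)^{2p(d)}$, which is summable in $d$ for $C_\varepsilon$ large enough; the Borel-Cantelli lemma then yields $\limsup_d \lambda_1(Z_d) \leq 2 + \varepsilon$ almost surely for each rational $\varepsilon > 0$.

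The main obstacle is establishing this moment estimate uniformly in $p$ throughout the regime $s/d \to \infty$. Starting from the exact formula of Theorem \ref{thm:moments-centered-Wishart},
\begin{equation*}
\E\,\tfrac{1}{d}\mathrm{tr}\,Z_d^{2p} = \sum_{\alpha \in \mathcal S^o_{2p}} d^{-2g(\alpha)}\bigl(d/s\bigr)^{|\alpha|-p},
\end{equation*}
the dominant contribution comes from genus-zero pair permutations ($g(\alpha)=0$, $|\alpha|=p$), counted by $\mathrm{Cat}_p \leq 4^p = 2^{2p}$, which matches the desired leading behavior. Every other term is damped by at least one factor of $d^{-2}$ (from $g(\alpha) \geq 1$) or $d/s$ (from $|\alpha| > p$), both tending to zero. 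The delicate point is to show that the number of permutations in $\mathcal S^o_{2p}$ at given genus $g$ and given excess length $|\alpha|-p$ grows only polynomially in $p$ (per unit of genus, respectively per unit of excess length), so that the suppression factors defeat the combinatorial counts uniformly for $p = O(\log d)$. This is in the spirit of the Soshnikov trace method and can be carried out along the lines of \cite{coneaap}; alternatively, one may verify that the Soshnikov bound from \cite{sos} underlying Lemma \ref{moment-bound} admits a version with constants uniform on a neighborhood of $c=0$, allowing direct application of that lemma at $c = d/s \to 0$.
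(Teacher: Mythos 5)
Your outline---Markov plus Borel--Cantelli on high moments for the upper bound on $\lambda_1(Z_d)$, and almost sure semicircular convergence of the empirical spectral distribution for the lower bound---is the same as the paper's. The lower-bound half is acceptable: the paper itself invokes Corollary \ref{cor:Zd-semicircular} together with the almost sure statement (citing \cite{bayi}, or a combinatorial argument along \cite{coneaap}), and your variance-plus-Chebyshev sketch is a legitimate way to supply it. The genuine gap is the upper bound: the estimate $\E\,\mathrm{tr}\,Z_d^{2p}\leq d\,(2+\varepsilon/2)^{2p}$ for $p$ growing with $d$ is the entire content of the theorem in the regime $s/d\to\infty$, and you explicitly leave it as ``the main obstacle''. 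Worse, the counting heuristic you offer is insufficient as stated: a bound of the form (number of $\alpha\in\mathcal S^o_{2p}$ with excess length $j$) $\leq \mathrm{Cat}_p\,P(p)^j$ with $P$ polynomial does not let the suppression $(d/s)^j$ win, because the theorem only assumes $d/s\to0$, possibly slower than any power of $1/\log d$; with $p\simeq\log d$ one can perfectly well have $P(p)\,d/s\to\infty$. To make a direct count work one must exploit the factorial damping in $j$ (the genus-zero counts behave roughly like $4^p(Cp^2)^j/(2j)!$, which resums to about $(2+\sqrt{d/s})^{2p}$), i.e.\ essentially rederive the edge location $2+\sqrt{d/s}$; and your alternative---a version of the Soshnikov bound \cite{sos} with constants uniform near $c=0$---is asserted, not proved.

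The paper closes exactly this gap with a short observation you are missing: in the exact formula of Theorem \ref{thm:moments-centered-Wishart} every term is positive and the exponent $|\alpha|-p/2$ is nonnegative (a fixed-point-free permutation of $[p]$ has at most $p/2$ cycles), so at fixed $d$ the moment $\E\,\mathrm{tr}\,Z_d^{p}$ is monotone nondecreasing in the ratio $d/s$. Hence, as soon as $\sqrt{d/s}\leq\varepsilon/2$, the moments are dominated by those of a Wishart model with the fixed ratio $c=\varepsilon^2/4$, to which Lemma \ref{moment-bound} (based on \cite{sos}, valid for all $p\leq\sqrt d$) applies, giving $\E\,\mathrm{tr}\,Z_d^{p}\leq C(2+\varepsilon)^{p}$; Markov/Jensen and Borel--Cantelli then finish exactly as you describe, with the even more generous range $p\leq\sqrt d$. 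So your proposal is right in outline, but without this positivity-and-monotonicity comparison (or a genuinely carried-out count with the factorial damping) the key moment bound, and hence the theorem, remains unproved.
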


\begin{proof}
Let $\varepsilon >0$ and $d_{0}$ large
enough such that for $d\geq d_{0}$, $\sqrt{d/s}\leq \varepsilon
/2$. The moment inequality from Lemma \ref{moment-bound} together
with the fact that our moment formula of Theorem
\ref{thm:moments-centered-Wishart} involves only positive terms
implies that there exists a constant $C>0$, such that for all
$p\leq \sqrt{d}$,
$$\E(\mathrm{tr} Z_{d}^{p}) \leq C(2+\varepsilon)^{p}.$$

Letting $d\to\infty$, by the Borel-Cantelli Lemma and the same
Jensen inequality as in the proof of Lemma \ref{moment-bound},
 we obtain that $\limsup \lambda_{1}\leq 2+\varepsilon$
almost surely. Since this holds true for all $\varepsilon >0$ and
since $\liminf \lambda_1 \geq 2$ by Corollary \ref{cor:Zd-semicircular}, we
get the desired result.
\end{proof}

An interesting aspect of the above proof is that it relies on
moment techniques, and therefore makes use of Theorem
\ref{thm:moments-centered-Wishart}. Here the interest of the
moment method and combinatorics is that they explain why one can
make rigorous a change of limit between $d\to\infty$ and
$s/d\to\infty$ regarding the almost sure convergence of the
largest eigenvalue.

We could have obtained directly this result with complex analysis
results (see e.g. \cite{boga}) and one could probably have
obtained more refined estimates (e.g. large deviation bounds,
universality results, etc); however, this was not in the spirit of
our combinatorial approach and we leave it for future investigation.

Using similar techniques, one could extend the above results to
show the following quantitative bound: for all $\varepsilon >0$,
with exponential small probability in $d \to \infty$, the spectrum
of a random density matrix $\rho$ from the induced ensemble of
parameters $(d,s)$ is contained in the interval
$$\left[\frac{1}{d}-\frac{2(1+\varepsilon)}{\sqrt{ds}},\frac{1}{d}+\frac{2(1+\varepsilon)}{\sqrt{ds}}\right].$$

\section{Existence of a threshold for $\mathcal{APPT}$}\label{sec:properties-APPT}

\subsection{Quantum states and Absolute PPT}
We now introduce some necessary notation and concepts related to
quantum information theory; readers are referred to \cite{Ben1}
and \cite{Nie1} for more details.

Consider a (complex) Hilbert space $\bC^d=\bC^{d_1} \otimes
\cdots\otimes \bC^{d_k}$ with (complex) dimension $d=d_1\cdots
d_k$, where $d_i\geq 2$ for all $i=1, \cdots, k$. The set of
states on $\bC^d$ (denoted by $\cD=\cD(\bC^d)$) can be identified
with the set of $d\times d$ density matrices. That is,
$$\cD=\{\r \in M^\text{sa}_d(\C) \, | \, \r \geq 0 \text{ and } \mathrm{tr}\r = 1\}.$$
Clearly the real
dimension of $\cD$ is $N=d^2-1$. The set $\cD$ is contained in the
affine hyperplane
$$H_1=\{A\in M^\text{sa}_d(\C) \, | \, \mathrm{tr} A = 1\},$$
endowed with the Hilbert-Schmidt inner product $\langle A,B\rangle = \tr (A^*B)$.

Partial tracing states on $\bC^d\otimes \bC^s$ over $\bC^s$ gives
reduced density matrices of size $d\times d$. Any state $\r$ on $
\bC^{d}\otimes\bC^{s}$ may be written as
$$\r = \sum_{i,j}^{d} \sum _{\a, \beta }^{s} \r_{i\a, j\beta }E_{i\alpha, j\beta},$$
where $E_{i\alpha, j\beta}$ are the matrix units associated to orthonormal bases $\{e_i\}_{i=1}^{d}$ and
$\{f_\a\}_{\a=1}^{s}$ of $\bC ^{d}$ and $\bC ^{s}$ respectively. The partial trace of $\r$ over
$\bC^s$, denoted by $\s=\mathrm{tr}_{\bC^s}(\r)$ may be
formulated as
$$ \s_{ij}=\sum _{\beta=1}^{s}
\r_{i\beta, j\beta} \ \mathrm{for}\ i,j=1, \cdots, d.$$

The \emph{induced measure} on $\cD(\C^{d})$ by partial tracing
over $\bC^s$ is an important probability distribution and can be
described as follows. Let $|\psi \rangle \langle \psi |$ be a
random pure state on $\C^d \otimes \C^s$, where $\psi$ is a random
unit vector uniformly distributed on the sphere
 in $\C^d \otimes \C^s$. Then the random induced state
 $\r=\mathrm{tr}_{\bC^s}(|\psi \rangle \langle
 \psi |)\in \cD(\bC^d)$ follows the distribution $\mu_{d,s}$.
Equivalently, one can find a $d\times s$ matrix $M$ distributed
uniformly on the sphere of $d\times s$ matrices, such that
$\r=MM^{*}.$ The distribution $\mu_{d,s}$ plays central roles
in this section. When $s=d$, one gets $\mu_{d,d}$, the normalized
Lebesgue (i.e., the Hilbert-Schmidt) measure on $\cD(\bC^{d})$.
Hence, a random state distributed according to $\mu_{d,d}$ is
uniformly distributed on $\cD(\C^d)$. When $s \geq d$, the
probability measure $\mu_{d,s}$ has a simple form \cite{Zycz2}
\begin{equation} \label{eq:formula-density} \frac{d\mu_{d,s}}{d\! \vol}(\rho)
= \frac{1}{Z_{d,s}} (\det \rho)^{s-d} ,\end{equation} where
$Z_{d,s}$ is a normalization factor. Note that formula
\eqref{eq:formula-density} allows to define the measure
$\mu_{d,s}$ (in particular) for every real $s \geq d$, while the
partial trace construction makes sense only for integer values of
$s$.

Hereafter, we will focus on the bipartite system, i.e, the case
$\bC^d=\bC^{d_1}\otimes \bC^{d_2}$ with $d=d_1d_2$. The partial
transpose operator (denoted by $T_2$) is a linear operation
that consists in taking the transpose in one leg and doing nothing
on the other leg, i.e., $T_2(\tau_1\otimes \tau_2)=\tau_1\otimes
\tau_2^{T}$, where $T$ is the normal transpose operator. The set
of quantum states with positive partial transpose is denoted by
$\mathcal{PPT}$, i.e., $\r\in \mathcal{PPT}$ if and only if
$T_2(\r)\geq 0$. Geometrically, $\mathcal{PPT}=\cD\cap
T_2(\cD)$, and $\mathcal{PPT}$ is a convex body with constant
height \cite{SBZ}. Peres-Horodecki PPT criterion states that
$\cS\subset \mathcal{PPT}$ \cite{Ho1, Pe1}, and
$\cS=\mathcal{PPT}$ only if $d\leq 6$ \cite{St1, Wo1}. Here, the
$N=d^2-1$ dimensional set $\cS \subset \cD$ is the set of
separable quantum states on $\bC^d$ \cite{Wer1} defined as
$$\cS =\cS(\bC^{d_1}\otimes \bC^{d_2}):=
{\rm conv}\{\rho _1 \otimes \rho _2, \ \ \rho _1\in
\cD(\bC^{d_1}),\ \ \r_2\in \cD (\bC ^{d_2})\}.$$ (Similarly, one
can define $\cS(\bC^{d_1}\otimes \cdots\otimes \bC^{d_k})$.) The
set $\cE:=\cD\setminus \cS$ is the set of entangled quantum
states, which play a crucial role in quantum information and
quantum computing.

 A quantum state $\r\in \cD(\bC^{d_1}\otimes \bC^{d_2})$ is \emph{absolutely
PPT}(or APPT) if for any unitary matrix $U\in \cU(d)$, $U\r U^*\in
\mathcal{PPT}$. The set of all states being APPT, denoted
as $\mathcal{APPT}$, is
\begin{equation} \mathcal{APPT}=\bigcap_{U\in \cU(d)}U
(\mathcal{PPT})U^* \subset \mathcal{PPT}.\label{APPT:def:1}
\end{equation} Clearly, $\mathcal{APPT}$ is a convex body, a
convex compact set with non-empty interior. This follows easily
from (\ref{APPT:def:1}) and the following result in
\cite{BarnumGurvits2002}: $\epsilon
\cD+(1-\epsilon)\frac{\bbI}{d}\subset \cS \subset \mathcal{PPT}$
for some $\epsilon<\frac{1}{d-1}$.

In applications, one often requires the convex bodies to be
 origin-symmetric. In this section, we will mainly work on the symmetric
convex body $\K = - \SEP \cap \SEP$ where
$\SEP=\mathcal{APPT}-\mathbb{I}/d$. Such a symmetrization will not
change many geometric parameters of interest (such as, the volume
radius and mean width) substantially, due to the famous
Rogers-Shephard inequality \cite{Rogers1957}. Both $\SEP$
and $\K$ sit in the {\em linear} hyperplane
$$H_0=\{A\in M^\text{sa}_d(\C) \, | \, \mathrm{tr} A = 0\}.$$

Recall that the gauge associated to a convex body $K\subset \bR^N$ is the
function $\|\cdot\|_K$ defined by
\[ \|x\|_K = \inf \{ t \geq 0 \st x \in tK \}, \ \ \ \forall x
\in \R^N.\] Note that $\|x\|_K = \|-x\|_K$ for origin-symmetric
convex bodies $K$. The outradius of $K$ is the smallest $R>0$ such
that $K$ is contained in a ball of radius $R$. Similarly, the
inradius of a convex body $K$ is the largest radius $r$ of a
Euclidean ball contained in $ K$. For origin-symmetric convex
bodies, $r$ and $R$ can be defined as the ``best'' constants such
that $R^{-1}|\cdot| \leq \|\cdot\|_K \leq r^{-1}|\cdot|$ (where
$|\cdot|$ denotes the Euclidean norm.) Let $X$ be a standard
Gaussian vector in $\R^N$, i.e., a random vector with independent
$N(0,1)$ coordinates in any orthonormal basis.

\begin{proposition}
In the notation of the present section, we have
\begin{equation}\label{bounds}
d^{1/2} | \cdot | \leq \|\cdot\|_{\K} \leq d |\cdot| \iff d^{-1} | \cdot | \leq \|\cdot\|_{\K^\circ} \leq d^{-1/2} |\cdot|,
\end{equation}
\begin{equation}\label{sym-upper}
 \E \|X\|_{\SEP} \leq \E \|X\|_{\K} \leq 2\E \|X\|_{\SEP}.
\end{equation}
\end{proposition}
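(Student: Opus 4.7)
The plan is to handle \eqref{bounds} and \eqref{sym-upper} separately, using throughout the elementary spectral bound $\max_i|\lambda_i(A)| \leq |A|$ valid for any Hermitian $A$ (an immediate consequence of $|A|^2 = \sum_i \lambda_i(A)^2$), together with the facts that the Hilbert--Schmidt norm is unitary-invariant and that the partial transpose $T_2$ is an HS-isometry. The equivalence in \eqref{bounds} between the two sets of bounds is nothing but gauge--polar duality for origin-symmetric convex bodies: $rB \subseteq K \subseteq RB$ if and only if $R^{-1}B \subseteq K^\circ \subseteq r^{-1}B$, so it suffices to establish the two inclusions $d^{-1}B \subseteq \K \subseteq d^{-1/2}B$, where $B$ is the Euclidean unit ball in $H_0$.

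For the outradius inclusion $\K \subseteq d^{-1/2}B$, I would start from the observation that if $A \in \K$ then both $\bbI/d \pm A$ lie in $\mathcal{APPT} \subseteq \cD$, hence are positive semidefinite. This forces $-\bbI/d \leq A \leq \bbI/d$, i.e.\ every eigenvalue of $A$ lies in $[-1/d,1/d]$, whence $|A|^2 \leq d \cdot (1/d)^2 = 1/d$. For the inradius inclusion $d^{-1}B \subseteq \K$, take $A \in H_0$ with $|A| \leq 1/d$ and an arbitrary unitary $U$. By unitary invariance of the HS norm and the HS-isometry of $T_2$, both $UAU^*$ and $T_2(UAU^*)$ are Hermitian with $|\cdot| \leq 1/d$, so each has smallest eigenvalue $\geq -1/d$. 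Consequently $\bbI/d + UAU^* \geq 0$ and $T_2(\bbI/d + UAU^*) = \bbI/d + T_2(UAU^*) \geq 0$, that is $U(\bbI/d + A)U^* \in \mathcal{PPT}$ for every $U$; therefore $\bbI/d + A \in \mathcal{APPT}$. Running the same argument with $-A$ in place of $A$ yields $\bbI/d - A \in \mathcal{APPT}$, so $A \in \K$.

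For \eqref{sym-upper}, the left inequality is immediate from the inclusion $\K \subseteq \SEP$ together with monotonicity of gauges. For the right inequality, the identity $\K = \SEP \cap (-\SEP)$ yields $\|x\|_\K = \max(\|x\|_\SEP, \|-x\|_\SEP) \leq \|x\|_\SEP + \|-x\|_\SEP$; taking expectation and using that $X$ and $-X$ are equidistributed standard Gaussians closes the proof. Of all the steps, the one I would initially expect to be the main obstacle is the inradius inclusion, since one might suspect it should require the Gurvits--Barnum ball of absolutely separable states; the pleasant surprise is that it reduces directly to the spectral $|\cdot|$-bound above, precisely because APPT is defined through operations ($U\cdot U^*$ and $T_2$) that are all unitary-invariant HS-isometries.
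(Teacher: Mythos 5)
Your argument is correct, and two of its three pieces coincide with the paper's proof: the outradius bound (from $A\in\K$ deduce $\Id/d\pm A\in\mathcal{APPT}\subseteq\cD$, hence $-\Id/d\leq A\leq \Id/d$ and $|A|\leq d^{-1/2}$) and the derivation of \eqref{sym-upper} via $\|x\|_{\K}=\max(\|x\|_{\SEP},\|-x\|_{\SEP})\leq \|x\|_{\SEP}+\|-x\|_{\SEP}$ and the symmetry of the Gaussian are exactly what the paper does. Where you genuinely diverge is the inradius inclusion: the paper simply invokes Corollary 3 of Gurvits--Barnum, which gives the exact inradius $(d(d-1))^{-1/2}$ of $\mathcal{APPT}$ (the separable ball of that radius around $\Id/d$ is unitarily invariant, hence absolutely separable, hence APPT), whereas you give a short self-contained spectral argument: for $|A|\leq 1/d$ and any unitary $U$, both $UAU^*$ and $T_2(UAU^*)$ are Hermitian of Hilbert--Schmidt norm at most $1/d$ (conjugation and partial transpose being HS-isometries), so their spectra lie above $-1/d$ and $U(\Id/d+A)U^*\in\mathcal{PPT}$ for every $U$. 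This buys elementarity and independence from the Gurvits--Barnum machinery, at the cost of the slightly weaker radius $1/d$ instead of $(d(d-1))^{-1/2}$ --- which is all that \eqref{bounds} requires. It is worth noting that your method can even recover the sharp constant: since $A$ and $T_2(UAU^*)$ are traceless, the bound $\lambda_{\min}\geq -|A|\sqrt{(d-1)/d}$ (rather than $\lambda_{\min}\geq -|A|$) shows that the HS ball of radius $(d(d-1))^{-1/2}$ already lies in $\K$, reproducing the paper's inradius statement without citation.
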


\begin{proof} Any matrix $A \in \SEP$ satisfies $A \geq
-\Id/d$. This implies that any $A \in \K$ satisfies $-\Id/d \leq A
\leq \Id/d$, or $\|A\|_{\iy} \leq 1/d$, and therefore the
outradius of $\K$ is bounded by $1/\sqrt{d}$. The inradius of
$\mathcal{APPT}$ equal to $(d(d-1))^{-1/2}$ follows directly from
Corollary 3 in \cite{BarnumGurvits2002}. This proves
(\ref{bounds}).

Note that the distribution of $X$ is symmetric, and
\begin{align*}
\|X\|_{\SEP} &\leq \|X\|_{\K} \\
&= \max (\|X\|_{\SEP},\|-X\|_{\SEP})
\leq \|X\|_{\SEP} + \|-X\|_{\SEP}.
\end{align*}
Then (\ref{sym-upper}) follows after taking expectation.
\end{proof}

It was pointed out that the set $\mathcal{APPT}$ varies if the
decomposition of $d$ changes: if $\min(d_1', d_2')\geq \min(d_1,
d_2)$, then any APPT quantum states on $\bbC^{d_1'}\otimes
\bbC^{d_2'}$ must be APPT on $\bbC^{d_1}\otimes \bbC^{d_2}$
\cite{Hildebrand2007}. Consequently, the largest APPT set is obtained
when $d_1=2$ and $d_2=d/2$, and the smallest APPT set is obtained when
$d_1=d_2=\sqrt{d}$.

Let $p=\min(d_1, d_2)$. Denote by $S_+=\{(k, l): 1\leq k\leq l\leq
p\}$ and $S_{-}=\{(k,l): 1\leq k<l\leq p\}$. Note that the
cardinalities of the sets $S_+$ and $S_{-}$ are $p_+=p(p+1)/2$ and
$p_{-}=p(p-1)/2$ respectively. Let $$\s_{+}: S_+\rightarrow \{1,
\cdots, p_+\}, \ \ \mathrm{and} \ \ \s_{-}: S_{-}\rightarrow \{1,
\cdots, p_{-}\},$$ be two orderings (i.e. bijective maps) on $S_+$
and $S_{-}$ respectively. Thus, $\s_{+}(k, l)\leq p_+$ and
$\s_{-}(k, l)\leq p_{-}$ for all pairs $(k,l)$.

For $\l=(\l_1, \cdots, \l_d)$ with ordering $\l_1\geq \l_2\geq
\cdots \geq \l_d\geq 0$ and $\sum_{i=1}^{d}\l_i=1$, one defines
the matrix $\L(\l; \s_{+}, \s_{-})$ as
$$\big(\L(\l; \s_+, \s_{-})\big)_{k,l}=\left\{
\begin{array}{cc}
\l _{d+1-\s_+(k,l)}, & k\leq l;\\
-\l _{\s_{-}{(l,k)}},& k>l.\end{array}\right.$$

Define the $p\times p$ symmetric matrix $\Theta(\l; \s_{+},
\s_{-})$ to be the sum of $\L(\l; \s_+, \s_{-})$ and its transpose
$\L^{T}(\l; \s_+, \s_{-}).$ Note that $\Theta(\l; \s_+, \s_{-})$
has the following form:
\begin{equation}\label{Theta:definition}\Theta_{k,l}=
\left\{\begin{array}{cc} 2\l_{a(k)}, & k=l,\\
\l_{b(k,l)}-\l_{c(k,l)}, & b(k,l)>c(k,l), k>l,
\end{array}\right.\end{equation} where $a(k)$, $b(k, l)$ and $c(k,l)$ are
some integer-valued functions with values smaller than or equal to
$d$. Thus $\Theta_{k, l} \leq 0$ for all $k\neq l$.

The following theorem is a necessary and sufficient condition for
$\r\in \mathcal{APPT}$ (see Theorem III.9 or Lemma III.10 in
\cite{Hildebrand2007} ).

\begin{theorem}\label{APPT:equivalence:condition}
Let $\r$ be a quantum state on $\bC^d=\bbC^{d_1}\otimes\bbC^{d_2}$
with $d=d_1d_2$. Suppose that $\r$ has eigenvalues $\l_1\geq
\l_2\geq \cdots\geq \l_d\geq 0$. Then $\r$ is $\mathcal{APPT}$ if
and only if for all pair of ordering $(\s_+, \s_{-})$, $\Theta(\l;
\s_+, \s_{-})$ is positive semi-definite.
\end{theorem}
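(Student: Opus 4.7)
Since absolute PPT depends only on the spectrum of $\rho$, we may assume $\rho = \diag(\lambda_1, \ldots, \lambda_d)$. Writing $u_i = U e_i$ for the columns of $U$, the condition $T_2(U\rho U^*) \succeq 0$ is equivalent, by self-adjointness of $T_2$ with respect to the Hilbert--Schmidt inner product, to
\[
\sum_{i=1}^d \lambda_i \langle u_i | M(v) | u_i \rangle \geq 0 \qquad \forall v \in \C^d, \ \forall \text{ orthonormal } \{u_i\}_{i=1}^d,
\]
where $M(v) := T_2(|v\rangle\langle v|)$.

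The first concrete step is to diagonalize $M(v)$. Using the Schmidt decomposition $v = \sum_{k=1}^p s_k\, a_k \otimes b_k$ with $s_k \geq 0$ and orthonormal families $\{a_k\}, \{b_k\}$, a direct computation in the basis $\{a_k \otimes b_l\}_{k,l=1}^p$ shows that $M(v)$ has eigenvalue $s_k^2$ on each $a_k \otimes b_k$ and on each plane $\Span(a_k \otimes b_l, a_l \otimes b_k)$ (for $k < l$) acts as the antidiagonal with entries $s_k s_l$, yielding eigenvalues $\pm s_k s_l$; the remaining $d - p^2$ dimensions lie in $\ker M(v)$. Hence the nonzero spectrum of $M(v)$ is $\{s_k s_l : (k,l) \in S_+\} \sqcup \{-s_k s_l : (k,l) \in S_-\}$.

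For the ($\Rightarrow$) direction, fix any pair of orderings $(\sigma_+, \sigma_-)$ and any $s \in \R_{\geq 0}^p$, and take $v$ with singular values $s$. The crucial observation is that we may choose $\{u_i\}$ to consist of eigenvectors of $M(v)$ in any prescribed order: set $u_i$ to be the eigenvector for $s_k s_l$ when $i = d + 1 - \sigma_+(k, l)$, $(k, l) \in S_+$; set $u_i$ to be the eigenvector for $-s_k s_l$ when $i = \sigma_-(k, l)$, $(k, l) \in S_-$; and let the remaining $u_i$ span $\ker M(v)$. A routine expansion then yields
\[
\sum_i \lambda_i \langle u_i | M(v) | u_i \rangle
= \sum_{(k,l) \in S_+} \lambda_{d+1-\sigma_+(k, l)}\, s_k s_l - \sum_{(k, l) \in S_-} \lambda_{\sigma_-(k, l)}\, s_k s_l
= \tfrac{1}{2}\, s^T \Theta(\lambda; \sigma_+, \sigma_-)\, s.
\]
The APPT hypothesis forces this to be $\geq 0$ for every $s \geq 0$ and every $(\sigma_+, \sigma_-)$. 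Since the diagonal of $\Theta$ is non-negative and its off-diagonal entries have the form $\lambda_b - \lambda_c$ with $b > c$ (hence are non-positive), the pointwise bound $x^T \Theta x \geq |x|^T \Theta |x|$ (valid for every $x \in \R^p$) upgrades non-negativity on $\R^p_{\geq 0}$ to full positive semi-definiteness.

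For the ($\Leftarrow$) direction, fix $v$ with singular values $s$ and let $(\sigma_+(s), \sigma_-(s))$ denote the orderings induced by sorting the multisets $\{s_k s_l\}_{(k, l) \in S_\pm}$ in decreasing order. By von Neumann's trace (rearrangement) inequality,
\[
\inf_{U \in \cU(d)} \sum_i \lambda_i \langle u_i | M(v) | u_i \rangle = \sum_i \lambda_i \mu_{d+1-i}(v) = \tfrac{1}{2}\, s^T \Theta(\lambda; \sigma_+(s), \sigma_-(s))\, s \geq 0,
\]
by the PSD hypothesis applied to this particular ordering. Hence $\mathrm{tr}(U \rho U^* M(v)) \geq 0$ for every $v$ and $U$, i.e., $\rho$ is APPT. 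The main technical subtlety lies in the first step: the explicit diagonalization of $M(v)$ and the combinatorial identification of its sorted eigenvalues with the $\Theta$-matrix entries. Once that is in place, the forward direction is driven by the freedom to permute eigenvector assignments within each sign class, and the converse is a direct application of the rearrangement bound.
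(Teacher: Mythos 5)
The paper does not actually prove this theorem: it is quoted verbatim from Hildebrand (Theorem III.9/Lemma III.10 of \cite{Hildebrand2007}), so there is no internal proof to compare against. Your argument is correct and amounts to a self-contained reconstruction of the cited result, built on the same three ingredients that underlie Hildebrand's proof: the spectrum of $M(v)=T_2(|v\rangle\langle v|)$ computed from the Schmidt decomposition, the freedom to pair those eigenvalues with an arbitrary eigenbasis of $U\rho U^*$, and a rearrangement bound. A few small points are worth making explicit. In the diagonalization, the invariant planes are spanned by $a_k\otimes\overline{b_l}$ and $a_l\otimes\overline{b_k}$ (the conjugation coming from the basis in which the transpose is taken); this does not affect orthonormality or the spectrum, so your conclusion stands. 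In the forward direction, your eigenvector assignment is consistent only because the index sets $\{d+1-\sigma_+(k,l)\}$ and $\{\sigma_-(k,l)\}$ are disjoint, which follows from $p_++p_-=p^2\leq d$; the same inequality is what guarantees $d+1-\sigma_+(k,l)>\sigma_-(k,l)$, hence the nonpositivity of the off-diagonal entries of $\Theta$ (as in \eqref{Theta:definition}) that your copositivity-to-PSD upgrade relies on, so it deserves a sentence. In the converse, ties among the products $s_ks_l$ (or vanishing Schmidt coefficients mixing with the $d-p^2$ kernel zeros) make the sorted orderings non-unique, but any decreasing choice yields the same value of the von Neumann minimum, so the identification with $\tfrac12 s^T\Theta(\lambda;\sigma_+(s),\sigma_-(s))s$ is unaffected. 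With these remarks added, the proof is complete and correct.
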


\subsection{Threshold for $\mathcal{APPT}$ is $\sim w(\SEP ^\circ)^2$.}

Hereafter, $w(\SEP ^\circ)$ denotes the mean width of $\SEP
^\circ$, the polar body of $\SEP$. For general convex body $K$
with the origin in its interior, its polar body (denoted by
$K^\circ$) can be defined as
\[ K^\circ = \{ y \in \R^N \st \langle x,y \rangle \leq 1 \ \ \forall x \in K
\},\] where $\langle \cdot ,\cdot \rangle$ denotes the usual inner
product and induces the Euclidean norm $|\cdot|$. The mean width
of $K$, $w (K)$, is defined as
\begin{equation}\label{mean:width:definition} w (K):=\int _{S^{N-1}}
h_K(u)\,d\s(u)=\int _{S^{N-1}} \|u\|_{K^\circ} d\s(u),
\end{equation} where $\,d\s(u)$ is the normalized spherical
measure on the sphere $S^{N-1}$, and $h_K(u)=\max _{x\in K}
\langle x, u\rangle = \|u\|_{K^\circ}$ for any  $u\in S^{N-1}$. A
more convenient quantity to calculate is the Gaussian mean width
of $K$
\begin{equation} \label{defn-mean-width} w_G(K)= \E \|X\|_{K^\circ} = \E \sup_{x \in K}
\langle X,x \rangle,\end{equation} where $X$ is a standard
Gaussian vector in $\R^N$. By passing to polar coordinates, one
can easily check that for every convex body $K \subset \R^N$
\begin{equation} \label{gammaN}
w_G(K) = \gamma_N \, w(K),
\end{equation}
where \be \label{gammaNbounds}
 \gamma_N = \E |G| = \frac{\sqrt{2}\Gamma((N+1)/2)}{\Gamma(N/2)},  \quad
\sqrt{N-1}\leq \gamma_N \leq \sqrt{N}, \ee
 is a constant depending only on $N$.
We set $s_0(d_1, d_2)$ to be
$$s_0=s_0(d_1, d_2)=\left(\frac{\E\|X\|_{\SEP}}{d^2}\right)^2=
\left(\frac{w_{G}(\SEP ^\circ)}{d^2}\right)^2\sim w(\SEP
^\circ)^2.$$ By inequality (\ref{sym-upper}), one has $s_0\sim
w(\SEP ^\circ)^2\simeq w(\K ^\circ)^2$. The following theorem
states that the threshold for the set $\mathcal{APPT}$ is of order
of $w(\SEP ^\circ)^2$.

\bt\label{Threshold:APPT:existence}  There are effectively
computable absolute constants $C,c>0$, such that, if $\rho$ is a
random induced state on $\C^{d_1} \otimes \C^{d_2}$ distributed
according to the measure $\mu_{d,s}$, then
\begin{itemize}
 \item [(i)] $\P(\rho \in \mathcal{APPT}) \leq C
 \exp(-cs)$ for $s \leq cs_0$;
 \item [(ii)] $\P(\rho \notin \mathcal{APPT})
 \leq C \exp(-cs_0)$ for $s \geq Cs_0$.
\end{itemize}  \et

We first point out that the threshold value for the set
$\mathcal{PPT}$ occurs only at those $s\geq 2d$. For the balanced
bipartite case (i.e. $d_1=d_2$) it follows from Theorem 4 in
\cite{Aubrun2011}, while for the unbalanced bipartite case (i.e.,
$d_1\neq d_2$) it follows from \cite{Nechita2011}. As
$\mathcal{APPT}\subset \mathcal{PPT}$, the threshold for
$\mathcal{APPT}$ must be larger and thus we also have $s\geq 2d$.

The following lemma is our main tool to prove that the threshold
for the set $\mathcal{APPT}$ can be taken as $w(\SEP^\circ)^2$.
This lemma aims to approximate $\rho-\Id/d$ by
$\frac{1}{d\sqrt{s}} X$ where $X$ is a standard Gaussian vector in
the space $H_0$ of traceless Hermitian $d \times d$ matrices. We
refer readers to its detailed proof in \cite{AubrunSzarekYe2011}.

\begin{lemma} \label{lem:MP-vs-GUE}
For every convex body $K \subset H_0$ containing $0$ in its
interior, and for every $s \geq d$, if $\rho$ is a random state on
$\C^d$ distributed according to $\mu_{d,s}$, and if $X$ is a
standard Gaussian vector in $H_0$, then
\[ \E \left\| \rho- \frac{\Id}{d} \right\|_K \simeq \frac{1}{d\sqrt{s}} \E \| X\|_{K} .\]
\end{lemma}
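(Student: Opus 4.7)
The plan is to prove the lemma via a quantitative matrix central limit theorem, identifying the centered induced state $\rho - \Id/d$ with the Gaussian fluctuation $X/(d\sqrt s)$ up to absolute multiplicative constants.

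First, use the Ginibre representation $\rho = W/\tr(W)$ with $W = MM^*$ a Wishart matrix ($M$ a $d \times s$ standard complex Gaussian matrix), and decompose
$$\rho - \frac{\Id}{d} = \frac{\pi_0(W - s\Id)}{\tr(W)} = \frac{\sqrt{ds}}{\tr(W)}\, \pi_0(Z_d),$$
where $\pi_0$ is the Hilbert-Schmidt projection onto $H_0$ and $Z_d = (W-s\Id)/\sqrt{ds}$ is the centered Wishart of Theorem \ref{thm:moments-centered-Wishart}. Since $\tr(W)$ is a $\chi^2$-type variable concentrating exponentially around $ds$, and $\|\rho - \Id/d\|_K$ admits a crude deterministic upper bound coming from the inradius of $K$ (finite by the hypothesis that $0$ lies in the interior of $K$), one may replace $\tr(W)$ by $ds$ at the cost of an absolute multiplicative constant, since the rare deviation event $\{|\tr(W) - ds| > ds/2\}$ has probability $\exp(-cds)$ and its contribution to the expected gauge is negligible. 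This reduces the claim to $\E\|\pi_0(Z_d)\|_K \simeq \frac{1}{\sqrt d}\E\|X\|_K$.

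Second, a direct Wick computation shows that $\pi_0(Z_d)$ and $X/\sqrt d$ have matching first and second moments on $H_0$: for every $y \in H_0$, $\E\langle \pi_0(Z_d), y\rangle = 0$ and $\mathrm{Var}\langle \pi_0(Z_d), y\rangle = \|y\|_{HS}^2/d = \mathrm{Var}\langle X/\sqrt d, y\rangle$. Writing $\|A\|_K = \sup_{y\in K^\circ}\langle A, y\rangle$, the problem becomes a comparison of expected suprema of two centered processes indexed by $K^\circ$ with identical covariance structure. For the Gaussian side, this is $w_G(K^\circ)/\sqrt d$; for the Wishart side, Hanson--Wright-type bounds for the Gaussian chaos $\langle Z_d, y\rangle = \tr((MM^* - s\Id)y)/\sqrt{ds}$ give subgaussian increments with variance proxy $\|y-y'\|_{HS}^2/d$, matching the Gaussian case. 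Talagrand's majorizing-measure theorem (equivalently, generic chaining) then transfers the expected supremum between the two processes up to absolute constants.

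The principal obstacle is that the Hanson--Wright tail bound for quadratic Gaussian chaos is not as sharp as the purely Gaussian tail: it carries an additional $\|y-y'\|_{op}/\|y-y'\|_{HS}$ factor, so the chaining comparison degrades on ``spike'' directions where this ratio is order~$1$. Handling this uniformly in $K$ requires a bulk--spike truncation: separately bound the contribution of spike directions using the spectral-edge estimates of Theorem \ref{thm:convergence-extreme} (which guarantee $\|Z_d\|_\infty$ is uniformly bounded with overwhelming probability), and apply chaining only to the bulk directions where the subgaussian constant is sharp. The hypothesis $s \geq d$ is precisely what keeps the Wishart spectral bulk well-behaved, ensuring the truncation loses only an absolute constant factor independent of $K$.
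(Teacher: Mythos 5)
You should first be aware that the paper itself contains no proof of Lemma \ref{lem:MP-vs-GUE}: it explicitly defers to the detailed proof in \cite{AubrunSzarekYe2011}, where the two inequalities are obtained by separate arguments exploiting the Gaussian structure of the Wishart ensemble, not by the kind of covariance-matching transfer you propose. Judged on its own terms, your argument has a genuine gap at its central step. The decomposition $\rho-\Id/d=\pi_0(W)/\mathrm{tr}(W)$ and the second-moment computation are correct, but the claim that Talagrand's majorizing measure theorem ``transfers the expected supremum between the two processes up to absolute constants'' because they share a covariance structure is false. The majorizing measure theorem gives a two-sided characterization of $\E\sup$ only for \emph{Gaussian} processes; for a non-Gaussian process with (sub)gaussian increments, generic chaining gives only an \emph{upper} bound, never a matching lower bound. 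Matching covariances do not imply comparable suprema: for $T=\{e_1,\dots,e_n\}$ the processes $t\mapsto\langle t,\xi\rangle$ with $\xi$ Rademacher versus Gaussian have identical covariance, yet the expected suprema are $1$ versus $\simeq\sqrt{\log n}$. Thus nothing in your proposal yields the lower bound $\E\|\pi_0(Z_d)\|_K\gtrsim d^{-1/2}\,\E\|X\|_K$, which is half of the lemma and must hold uniformly over all convex bodies $K$. Even the upper bound is not delivered as stated: the Hanson--Wright increments have mixed subgaussian/subexponential tails, so chaining produces a bound of the form $\gamma_2(K^\circ,\mathrm{HS})+\gamma_1(K^\circ,\mathrm{op})$, and the $\gamma_1$ term is not controlled by $d^{-1/2}\E\|X\|_K$ for general $K$; your proposed ``bulk--spike truncation'' is a sketch rather than an argument, and it invokes Theorem \ref{thm:convergence-extreme}, which is an asymptotic, almost-sure statement proved in the regime $d\ll s$, whereas the lemma requires non-asymptotic two-sided bounds with absolute constants for \emph{every} $s\geq d$, including $s\simeq d$.

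A secondary, fixable issue: in replacing $\mathrm{tr}(W)$ by $ds$ you dismiss the deviation event using the deterministic inradius bound, but the resulting error $r^{-1}e^{-cds}$ must be compared with the main term, which is only bounded below by $c/(R\sqrt{s})$ ($R$ the outradius), and the ratio $R/r$ is unbounded over convex bodies, so the argument as written is not uniform in $K$. This can be repaired (e.g.\ Cauchy--Schwarz on the bad event together with a second-moment or concentration bound for $\|\pi_0(W)\|_K$), but it needs to be said. The essential missing ingredient, however, is a genuine two-sided comparison between the Wishart fluctuation and the Gaussian vector --- precisely the content of the argument in \cite{AubrunSzarekYe2011} that the paper relies on.
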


 Applying the lemma for
$K=\SEP$, we obtain that \be \label{MPorder} \E \left\| \rho-
\frac{\Id}{d} \right\|_{\SEP} \simeq \sqrt{\frac{s_0(d)}{s}} . \ee
This suggests that the threshold for the set $\SEP$ occurs at
$s_0(d)$, since a state $\rho$ is APPT when $\|\rho-\Id/d\|_{\SEP}
\leq 1$ and non-APPT when $\|\rho-\Id/d\|_{\SEP} > 1$.

The proof of Theorem \ref{Threshold:APPT:existence} is almost
identical to that of Section 4 in \cite{AubrunSzarekYe2011}, and
here we sketch its proof for completeness. We refer the readers to
\cite{AubrunSzarekYe2011} for more details, in particular Appendix
E for the L\'{e}vy's Lemma and concentration of measure theory.

\begin{proof}[Proof of Theorem \ref{Threshold:APPT:existence}.]
Let $S_{HS}$ be the Hilbert-Schmidt sphere in the space of $d
\times s$ matrices (it can be identified with the real sphere
$S^{2ds-1}$) and $f: S_{HS} \to \R$ be the function defined by \[
f(M) = \left\| MM^* - \frac{\Id}{d} \right\|_{\SEP} .\]
Formula (\ref{MPorder}) asserts that $\E f\simeq
\sqrt{\frac{s_0(d)}{s}}.$ For every $r>0$, denote by
$\Omega=\Omega (r)$ the subset
\[ \Omega = \{ M \in S_{HS} \st \|M\|_{\iy} \leq r \} .\]  Inequality \eqref{bounds}
and the proof of Lemma 4.2 in \cite{AubrunSzarekYe2011} imply that
the Lipschitz constant of $f|_{\Omega}$ is bounded by $2rd$. Note
that $\Omega (r)=S_{HS}$ since $\|M\|_{\iy}\leq \|M\|_2=1$. Then,
the global Lipschitz constant of $f$ is bounded by $2d$, and hence
the median of $f$ (denoted $M_f$) differs from its mean, $\E f$,
by at most $C2d/\sqrt{2ds}=C'\sqrt{d/s}$ (see Appendix E in
\cite{AubrunSzarekYe2011}). It follows that the median of $f$  is
also of order $\sqrt{s_0/s}$.

By a net argument similar to that in
\cite{AubrunSzarekWerner2011}, one has $\P(S_{HS} \setminus
\Omega) \lesssim \exp (-cs)$  if $r=3/\sqrt{d}$. A local version
of L\'evy's lemma (see Appendix E in \cite{AubrunSzarekYe2011})
implies that for $\e=M_f/2 \simeq \sqrt{s_0/s}$,
\begin{align*}
\P( |f-M_f| \geq M_f/2)&=\P( |f-M_f| \geq \e) \\
&\lesssim \P(S_{HS} \setminus \Omega) + \exp(-c_1ns(\e/2dr)^2) \\
& \lesssim \exp(-cs) +\exp(-cs_0).
\end{align*}
Therefore, one has
$$\P(\rho \textnormal{ is APPT}) = \P( f \leq 1) \lesssim \exp(-cs),$$
whenever $M_f\geq 2$ (or, equivalently, $s \lesssim
s_0$) and
$$\P(\rho \textnormal{ is not APPT}) = \P( f > 1) \lesssim
\exp(-cs_0),$$
whenever $M_f \leq 2/3$ (or, equivalently, $s \gtrsim s_0$).
 This ends the proof of Theorem \ref{Threshold:APPT:existence}.
\end{proof}

\section{Estimates on the threshold for $\mathcal{APPT}$}\label{sec:bounds-APPT}
We consider the product systems $\mathbf C^d = \mathbf C^{d_1}
\otimes \mathbf C^{d_2}$. Recall that a random state $\rho$ on
$\C^d= \C^{d_1}\otimes \C^{d_2}$ distributed according to
$\mu_{d,s}$ has the same distribution as $MM^*$, where $M$ is a $d
\times s$ matrix uniformly distributed on the Hilbert-Schmidt
sphere of the Hilbert space of $d \times s$ complex matrices. A
more convenient, but equivalent, way is to link the measure
$\mu_{d,s}$ with a normalized Wishart matrix. More precisely, let
$W  = GG^*$, where $G \in M_{d\times s}(\mathbf C)$ is a Ginibre
matrix, i.e. a matrix with i.i.d. standard complex Gaussian
entries. The random matrix $W$ is called a Wishart matrix of
parameters $(d,s)$. A \emph{random induced state} $\r$ distributed
according to $\mu_{d,s}$ is then given by the formula $\rho = W /
\mathrm{tr} W$ \cite{nechita, Zycz2}.

The following theorem is the main result of this section, providing asymptotic values of the threshold for the convex body $\mathcal{APPT}$. This theorem covers the case when $p = \min(d_1, d_2) \to \infty$. The simpler case of $p$ bounded is treated in Theorem \ref{thm:threshold-p-finite} as it gives sharper bounds.

\begin{theorem}\label{thm:threshold-p-infty}
Let $\rho$ be a random induced state distributed according to
the probability measure $\mu_{d, s}$.
\begin{itemize}
\item[(i)] For all $\varepsilon
>0$, almost surely, when $d \to \infty$ and
$s>(4+\varepsilon)p^2d$, the quantum state $\rho$ is APPT;
\item[(ii)] When $1 \ll p^2 \ll d$ and $s<(4-\varepsilon)p^2d$, $\rho$
is not APPT almost surely;
\item[(iii)] When $p^2 \sim \tau d$ for a constant $\tau \in (0,1]$,
there exists a constant $C_\tau$ (see formula \eqref{eq:Ctau})
such that whenever and $s<4(C_\tau-\varepsilon)p^2d$, $\rho$ is
not APPT almost surely.
\end{itemize}
\end{theorem}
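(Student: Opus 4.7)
The starting point is Hildebrand's criterion (Theorem~\ref{APPT:equivalence:condition}): $\rho \in \mathcal{APPT}$ iff $\Theta(\lambda;\sigma_+,\sigma_-) \succeq 0$ for every pair of orderings $(\sigma_+,\sigma_-)$. I parametrize the eigenvalues of $\rho$ via the centered Wishart matrix of Section~\ref{sec:combinatorics-Wishart}, writing $\lambda_i = \frac{1}{d} + \frac{z_i}{\sqrt{ds}}$ where $z_1 \geq \cdots \geq z_d$ are the ordered eigenvalues of $Z_d$. In all three regimes $p\to\infty$ and $s$ is of order $p^2 d$ or less, so $s/d\to\infty$ and Theorem~\ref{thm:convergence-extreme} together with Corollary~\ref{cor:Zd-semicircular} control the spectrum of $Z_d$. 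The single structural observation driving everything is that by \eqref{Theta:definition}, the off-diagonal entries of $\Theta$ are non-positive.

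For part~(i), I exploit this sign structure: for a symmetric matrix with non-positive off-diagonals, diagonal dominance $\Theta_{kk} \geq \sum_{l \neq k}|\Theta_{kl}|$ is sufficient for $\Theta\succeq 0$. Substituting the parametrization and multiplying by $\sqrt{ds}$, this becomes
$$2\sqrt{s/d} + 2 z_{d+1-\sigma_+(k,k)} \geq \sum_{l \neq k}\bigl(z_{\sigma_-(k,l)} - z_{d+1-\sigma_+(k,l)}\bigr).$$
For any fixed $\varepsilon'>0$, Theorem~\ref{thm:convergence-extreme} yields $|z_i|\leq 2+\varepsilon'$ almost surely for $d$ large. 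The right-hand side is thus at most $(p-1)(4+2\varepsilon')$, the term $-2z_{d+1-\sigma_+(k,k)}$ is at most $2(2+\varepsilon')$, and the sufficient condition collapses to $\sqrt{s/d}\geq (2+\varepsilon')p$. Choosing $\varepsilon'$ with $(2+\varepsilon')^2<4+\varepsilon$ establishes~(i).

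For the lower bounds I test $\Theta$ against $\mathbf{1}\in\R^p$. The crucial point is that because $\sigma_+$ and $\sigma_-$ are bijections onto $\{1,\ldots,p_+\}$ and $\{1,\ldots,p_-\}$ respectively, the quadratic form
$$\mathbf{1}^{T}\Theta \mathbf{1} = 2\sum_{j=d-p_++1}^{d}\lambda_j - 2\sum_{i=1}^{p_-}\lambda_i = \frac{2p}{d} - \frac{2 S_d}{\sqrt{ds}}, \qquad S_d := \sum_{i=1}^{p_-} z_i - \sum_{j=d-p_++1}^{d} z_j,$$
does not depend on the choice of orderings. Hence $S_d>p\sqrt{s/d}$ is a sufficient condition for $\rho$ to fail APPT. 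In regime~(ii), edge concentration of eigenvalues (together with $p_\pm/d\to 0$) forces the top $p_-$ eigenvalues of $Z_d$ to cluster at $+2$ and the bottom $p_+$ at $-2$, giving $S_d\geq (2-\varepsilon'')p^2$ almost surely; the condition above then reduces to $s<(2-\varepsilon'')^{2}p^2 d$, yielding any $c<4$. In regime~(iii), where $p^2\sim \tau d$, weak convergence of the empirical spectral distribution of $Z_d$ to the semicircle density $\rho_{sc}(x) = (2\pi)^{-1}\sqrt{4-x^{2}}$ on $[-2,2]$ gives $S_d/d\to 2J(\tau)$ with $J(\tau) = \int_{x_\tau}^{2} x\,\rho_{sc}(x)\,dx$ and $x_\tau$ defined by $\int_{x_\tau}^{2} \rho_{sc}(x)\,dx = \tau/2$. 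The non-APPT inequality then becomes $s<4C_\tau p^2 d$, where
\begin{equation}\label{eq:Ctau}
C_\tau \;=\; \frac{J(\tau)^{2}}{\tau^{2}}.
\end{equation}

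The main obstacle is the passage from convergence of \emph{extremal} eigenvalues in Theorem~\ref{thm:convergence-extreme} to convergence of the \emph{partial sums} defining $S_d$. In regime~(iii) this is a standard linear spectral statistic for the semicircle, obtained by applying weak convergence of the ESD to the bounded, a.e.\ continuous function $x\mapsto x\,\mathbf{1}_{[x_\tau,2]}(x)$. The more delicate case is (ii), where one needs rigidity at the soft edge: \emph{all} of the top $p_-$ eigenvalues, not just the largest, must lie within $o(1)$ of $+2$ when $p_-/d\to 0$. This can either be imported from known rigidity theorems for Wishart matrices available in the regime $s/d\to\infty$, or obtained in the combinatorial spirit of Section~\ref{sec:combinatorics-Wishart} by upgrading the high-moment estimate of Lemma~\ref{moment-bound} to a Markov-type bound on the number of eigenvalues falling below $2-\delta$ for any prescribed $\delta>0$.
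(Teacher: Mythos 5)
Your proposal is correct and follows essentially the same route as the paper: part (i) is the same perturbative estimate (your diagonal-dominance/Gershgorin bound and the paper's $\|\Upsilon\|_{\mathrm{op}}\leq p\max_{i,j}|\Upsilon_{ij}|$ are the same elementary inequality, giving the same constant), and parts (ii)--(iii) use the identical all-ones test vector, the same ordering-independent reduction to $S_d>p\sqrt{s/d}$, and the same semicircle computation producing the paper's $C_\tau$. The only comment worth making is that the obstacle you single out in regime (ii) is not actually there: no soft-edge rigidity is needed, because $p_\pm=o(d)$ while the almost sure convergence of the empirical spectral distribution of $Z_d$ (the same input you invoke in regime (iii), obtained from Corollary \ref{cor:Zd-semicircular} and Theorem \ref{thm:convergence-extreme} as the paper notes) already guarantees that, for any $\delta>0$, an order-$d$ number of eigenvalues exceed $2-\delta$ and an order-$d$ number lie below $-(2-\delta)$, so all of the top $p_-$ and bottom $p_+$ eigenvalues are within $\delta$ of the respective edges, giving $S_d\geq(2-\delta)p^2$ without importing rigidity theorems or upgrading Lemma \ref{moment-bound}. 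This is also the correct reading of the paper's own, tersely justified, appeal to Theorem \ref{thm:convergence-extreme} at that step.
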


\begin{proof} We start with (i). For given eigenvalues
$\{\l_1, \cdots, \l_d\}$, we introduce the following $p\times p$
matrix:
$$\Upsilon=\Upsilon(\lambda;\sigma_+, \sigma_-)=
\Theta(\lambda;\sigma_+, \sigma_-) -2d^{-1} \Id _p,$$ where $\Id
_p$ denotes the $p\times p$ identity matrix. Recall that
$p=\min(d_1, d_2)$. From formula \eqref{Theta:definition} and
Theorem \ref{thm:convergence-extreme}, the matrix $\Upsilon$ has
small entries: $|\Upsilon _{ij}|\leq (4+\varepsilon)/\sqrt{ds}$. A
necessary condition for the matrix $\Theta = 2d^{-1} \Id_p +
\Upsilon$ to be semidefinite positive is that $\Upsilon$ should
have operator norm smaller than $2/d$. It is a well known fact in
matrix analysis (see \cite{hj}) that
$$ \| \Upsilon \|_\text{op} \leq p \| \Upsilon \|_{1 \to \infty} = p \max_{1 \leq i,j \leq p} | \Upsilon_{ij} | \leq \frac{(4+\varepsilon)p}{\sqrt{ds}}.$$
The conclusion in the statement follows by asking that
$(4+\varepsilon)p/{\sqrt{ds}} \leq 2/d$.

We move now to the proofs of (ii) and (iii). We shall proceed by
exhibiting a vector $x \in \mathbf R^p$, such that, $x^T
\Lambda(\lambda;\sigma_+,\sigma_-)x<0$ for some pair of linear
orderings. This does indeed suffice to show that the matrix
$\Theta$ is not semidefinite positive. Indeed, we take the column
vector $x=(1,1, \ldots, 1)^T \in \mathbf R^p$. Any pair of linear
orderings is compatible with such a vector, and one has
\begin{equation}\label{eq:test-with-ones}
x' \Lambda(\lambda;\sigma_+,\sigma_-)x = \sum_{i,j=1}^p
\Lambda_{ij} = \sum_{i=1}^{p_+} \lambda_{d+1-i} -
\sum_{i=1}^{p_{-}} \lambda_{i},
\end{equation} where $p_+=p(p+1)/2$ and $p_{-}=p(p-1)/2$.

We shall now consider the two regimes in the statement, starting
with $1 \ll p^2 \ll d$. The main idea here is to note that, for all
$i$, $\lambda_i = 1/d + \tilde \lambda_i / \sqrt{ds}$, where
$\tilde \lambda_i$ are the eigenvalues of the matrix $Z_d$
introduced in Section \ref{sec:moments-centered-Wishart}.
By Theorem \ref{thm:convergence-extreme}, for all $\varepsilon>0$ and for
$d$ large enough, all the ``large'' eigenvalues $\lambda_1,
\ldots, \lambda_{p_-}$ appearing in equation
\eqref{eq:test-with-ones} are bigger than $1/d +
(2-\varepsilon)/\sqrt{ds}$; in the same vein, all the ``small''
eigenvalues $\lambda_{d+1-p_+}, \ldots, \lambda_d$ are smaller
than $1/d - (2-\varepsilon)/\sqrt{ds}$.
We obtain
$$x' \Lambda(\lambda;\sigma_+,\sigma_-)x
\leq p_+\left(\frac{1}{d} - \frac{2-\varepsilon}{\sqrt{ds}}
\right) - p_-\left(\frac{1}{d} + \frac{2-\varepsilon}{\sqrt{ds}}
\right) = \frac{p}{d} - \frac{(2-\varepsilon)p^2}{\sqrt{ds}},$$
which is negative as long as $s<(4-\varepsilon)p^2d$. This
concludes the proof for the first regime $1\ll p^2 \ll d$.

We now move on the the second regime, where $p^2 \sim \tau d$, for
a fixed constant $\tau \in (0,1]$. Writing equation (\ref{eq:test-with-ones}) in terms of the
eigenvalues of $Z_d$, we obtain
$$x' \Lambda(\lambda;\sigma_+,\sigma_-)x =
\frac{p}{d} + \frac{1}{\sqrt{ds}}\left[\sum_{i=1}^{p_+} \tilde
\lambda_{d+1-i} -  \sum_{i=1}^{p_{-}} \tilde\lambda_{i}\right].$$
Using the fact that the asymptotic spectrum $\{\tilde \lambda_i\}$
of $Z_d$ is semicircular, we obtain the following bounds for $d$
large enough:
\begin{equation*}
\frac{1}{d}\sum_{i=1}^{p_+} \tilde \lambda_{d+1-i} \lesssim
\int_{-2}^{-c_{\tau/2}-\varepsilon} x w(x)\mathrm{d}x, \ \ \
\mathrm{and} \ \ \ \frac{1}{d}\sum_{i=1}^{p_{-}} \tilde\lambda_{i}
\gtrsim \int_{c_{\tau/2}+\varepsilon}^2 x w(x)\mathrm{d}x,
\end{equation*}
where
$$w(x) = \frac{1}{2\pi}\sqrt{4-x^2}$$
is the density of the standard semicircular distribution and
$c_{\tau/2} \in [0,2]$ is defined implicitly by
$$\int_{c_{\tau/2}}^2 w(x)\mathrm{d}x = \tau/2.$$

Indeed, it is a classical result in random matrix theory that the
conclusions of Theorem \ref{thm:convergence-extreme} and of
Corollary \ref{cor:Zd-semicircular} for the matrix model $Z_{d}$
imply that its repartition function converges almost surely
uniformly towards the repartition function of the semi-circle
distribution. This implies the previous claim.

By the previous bounds, we obtain that
$$x' \Lambda(\lambda;\sigma_+,\sigma_-)x \lesssim \frac{p}{d} - 2\sqrt{\frac{d}{s}} \int_{c_{\tau/2}+\varepsilon}^2 x w(x)\mathrm{d}x.$$
Using $p \sim \sqrt \tau d^{1/2}$, the above expression is seen to
be negative as soon as
$$s \lesssim 4p^2d \left(\frac{\int_{c_{\tau/2}+\varepsilon}^2 x w(x)\mathrm{d}x}{\tau}\right)^2 = 4(C_\tau-\varepsilon)p^2d,$$
where we put
\begin{equation}\label{eq:Ctau}
C_\tau = \left(\frac{\int_{c_{\tau/2}}^2 x
w(x)\mathrm{d}x}{\tau}\right)^2.
\end{equation}
One can easily show that the map $\tau \mapsto C_\tau$ is
increasing and that
$$C_0 = \lim_{\tau \to 0} C_\tau = 1 \qquad \text{ and } \qquad  C_1 = \left(\int_0^2 x w(x)\mathrm{d}x \right)^2 = \frac{16}{9\pi^2},$$
and the proof is complete.
\end{proof}

When $p=\min(d_1, d_2)$ is fixed and $s/d \to c$ for a constant $c
>0$ as $d\rightarrow \infty$, one can obtain the following
sharp estimate on the threshold for $\mathcal{APPT}$.

\begin{theorem}\label{thm:threshold-p-finite}
Let $\rho$ be a random induced state distributed according to
the measure $\mu_{d,s}$. Almost surely, when $d \to \infty$ and
$s\sim cd$, one has:
\begin{itemize}
\item[(i)] $\rho\in \mathcal{APPT}$, if $c > (p + \sqrt{p^2-1})^2$;
\item[(ii)] $\rho\notin \mathcal{APPT}$, if $c < (p + \sqrt{p^2-1})^2$.
\end{itemize}
\end{theorem}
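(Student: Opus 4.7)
The plan is to exploit the fact that for fixed $p$, only finitely many (namely $p^2$) eigenvalues of $\rho$ appear in the matrix $\Theta(\lambda;\sigma_+,\sigma_-)$, and each of them is one of the top $p_-$ or bottom $p_+$ eigenvalues, whose limits are given by classical Marchenko--Pastur edge convergence. The input I would invoke is the Bai--Yin type result (see \cite{bayi}): assuming $s/d \to c \geq 1$, for every fixed $k$ the $k$-th largest and $k$-th smallest eigenvalues of $W/s$ converge almost surely to $(1+1/\sqrt{c})^2$ and $(1-1/\sqrt{c})^2$ respectively. Combined with $\mathrm{tr}(W)/(ds) \to 1$ almost surely, this yields $d\lambda_i \to (1+1/\sqrt{c})^2$ for each fixed $i \leq p_-$ and $d\lambda_{d+1-i} \to (1-1/\sqrt{c})^2$ for each fixed $i \leq p_+$. (The sub-case $c<1$ of part (ii) is even easier: the bottom $p_+$ eigenvalues of $W$ are then identically zero, so the test below produces a strictly negative quantity automatically.)

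The next step is to compute the asymptotic behaviour of $\Theta(\lambda;\sigma_+,\sigma_-)$ for any pair of orderings. By the definitions of $\Lambda$ and $\Theta$, every diagonal entry $\Theta_{kk}$ equals twice one of the $p_+$ smallest eigenvalues, while every off-diagonal entry ($k\neq l$) equals one of the $p_+$ smallest minus one of the $p_-$ largest eigenvalues. Thus almost surely
$$d\,\Theta_{kk}\to 2\bigl(1-1/\sqrt{c}\bigr)^2,\qquad d\,\Theta_{kl}\to \bigl(1-1/\sqrt{c}\bigr)^2-\bigl(1+1/\sqrt{c}\bigr)^2=-\frac{4}{\sqrt{c}}\quad (k\neq l).$$
Crucially, the limit does not depend on the orderings $\sigma_\pm$ because the relevant extremal eigenvalues collapse to only two distinct values. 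It therefore suffices to study the single $p\times p$ matrix $\Theta_\infty = \tfrac{2}{d}\bigl[(1+1/c)\,\Id_p - (2/\sqrt{c})\,J_p\bigr]$, where $J_p$ is the $p\times p$ all-ones matrix.

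The spectrum of $\Theta_\infty$ is explicit: along $(1,\ldots,1)^T$ the eigenvalue is $\tfrac{2}{d}[1+1/c-2p/\sqrt{c}]$, and on the orthogonal complement the eigenvalue $\tfrac{2}{d}(1+1/c)>0$ has multiplicity $p-1$. Writing $u=\sqrt{c}$, the first eigenvalue has the sign of $u^2-2pu+1$, which is non-negative exactly when $u\geq p+\sqrt{p^2-1}$ (the other root lies in $(0,1]$ and is discarded since $c\geq 1$). This produces the threshold. For part (i), if $c>(p+\sqrt{p^2-1})^2$ then $\Theta_\infty$ is positive definite; since there are only finitely many pairs of orderings, the entrywise almost-sure convergence implies $\Theta(\lambda;\sigma_+,\sigma_-)$ is positive semi-definite simultaneously for every ordering once $d$ is large, so $\rho\in\mathcal{APPT}$ almost surely. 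For part (ii), if $c<(p+\sqrt{p^2-1})^2$ (with $c\geq 1$), I would fix any orderings and test $\Theta$ with $x=(1,\ldots,1)^T$; summing the contributions yields $x^T\Theta x \sim \tfrac{2p}{d}[1+1/c-2p/\sqrt{c}]<0$ almost surely, so $\Theta$ fails to be positive semi-definite.

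The structural simplification compared with Theorem \ref{thm:threshold-p-infty} is that, with $p$ fixed, only finitely many orderings and finitely many extremal eigenvalues participate, so one can bypass operator-norm estimates entirely and obtain the sharp transition constant as an exact spectral computation for a rank-one-plus-scalar matrix. The only non-elementary input is the Bai--Yin convergence of individual extremal eigenvalues, which is standard; no obstacle of substance remains.
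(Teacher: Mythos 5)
Your proposal is correct and follows essentially the same route as the paper's proof: Hildebrand's spectral criterion, almost sure convergence of the finitely many extremal eigenvalues to the Marchenko--Pastur edges (so that the limiting $\Theta(\lambda;\sigma_+,\sigma_-)$ is the same for every pair of orderings), and explicit diagonalization of the resulting scalar-plus-all-ones matrix, giving the threshold $c=(p+\sqrt{p^2-1})^2$. The only differences are cosmetic: you normalize via $\rho=W/\mathrm{tr}\,W$ and dispose of $c<1$ by the rank deficiency of $W$, whereas the paper invokes scale invariance of the criterion and works with the unnormalized Wishart matrix, using $a_c=0$ in that regime.
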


To prove this result, we need the following well-known result
in random matrix theory. This result describes the behavior of the
spectrum of a Wishart matrix of parameters $(d,s)$ with
$s/d\rightarrow c$ and $d\rightarrow \infty$ \cite{bs}.

\begin{proposition}\label{prop:RMT-Wishart}
Let $\lambda_1 \geq \lambda_2 \geq \cdots \geq \lambda_d \geq 0$
be the eigenvalues of a Wishart matrix of parameters $(d, s)$.
Then, in the asymptotic regime $d \to \infty$ and $s \sim cd$ for
a constant $c>0$, one has:
\begin{itemize}
\item[(i)] Almost surely, when $d \to \infty$, the empirical eigenvalue
distribution
$$\mu(s^{-1}W) = \frac{1}{d}\sum_{i=1}^d \delta_{s^{-1}\lambda_i}$$
converges weakly to the Marchenko-Pastur (or free Poisson)
distribution $\pi_c$ given by the formula
$$\pi_c=\max (1-c,0)\delta_0+\frac{\sqrt{4c-(x-1-c)^2}}{2\pi x}1_{[(\sqrt c - 1)^2,(\sqrt c +1)^2]}(x)\,dx;$$
\item[(ii)] For every fixed integer $k$, almost surely, as $d\to \infty$
$$ \lambda_d, \lambda_{d-1}, \ldots, \lambda_{d-k+1} \to a_c=
\begin{cases}
0 &\qquad \text{ if } c \leq 1,\\
(\sqrt c - 1)^2 &\qquad \text{ if } c > 1,
\end{cases}
$$
and
$$\lambda_1, \lambda_2, \ldots, \lambda_k \to b_c=(\sqrt c +1)^2.$$
\end{itemize}
\end{proposition}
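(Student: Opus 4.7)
My plan is to prove parts (i) and (ii) via the moment method, leveraging the combinatorial machinery already developed in Section \ref{sec:combinatorics-Wishart}, and to handle the lower edge (when $c>1$) by an auxiliary argument on smallest singular values.

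For part (i), I would start from the expansion
$\E \tr(W^p) = (ds)^p \sum_{\alpha \in \mathcal S_p} d^{-|\alpha^{-1}\gamma|} s^{-|\alpha|}$
used in the proof of Theorem \ref{thm:moments-centered-Wishart}. Dividing by $ds^p$ gives the expected $p$-th moment of $\mu(s^{-1}W)$, and under $s/d\to c$ one computes the exponent of $d$ in each summand to be $\#(\alpha^{-1}\gamma)+\#\alpha-p-1=-2g(\alpha)$. By the geodesic inequality this is non-positive, and only genus-$0$ permutations contribute in the limit. Invoking Biane's bijection between genus-$0$ permutations in $\mathcal S_p$ and non-crossing partitions in $NC(p)$ (where the number of cycles becomes the number of blocks), the limit is $\sum_{\pi\in NC(p)} c^{\#\pi-p}$, which one checks is exactly the $p$-th moment of $\pi_c$ in the paper's normalization (it has mean $1$, not $c$). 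Almost-sure convergence is upgraded from convergence in expectation by a variance estimate: $\operatorname{Var}\bigl(\tfrac{1}{d}\tr((W/s)^p)\bigr)$ can be expanded via a double Wick sum indexed by $(\alpha,\beta)\in\mathcal S_p\times\mathcal S_p$, where the disconnected pairs produce the square of the mean and the connected ones contribute $O(d^{-2})$; Borel--Cantelli then gives a.s. convergence for every fixed $p$, hence weak convergence of the empirical measure. The atom $\max(1-c,0)\delta_0$ requires no spectral input: when $s<d$, the matrix $W=GG^*$ with $G$ a $d\times s$ Ginibre matrix has rank exactly $s$ almost surely, contributing $d-s$ hard zeros.

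For part (ii), I would treat the two edges separately. For the upper edge, I would apply the moment method with $p=p_d\to\infty$ at a controlled rate (e.g.\ $p_d\leq d^{1/6}$), combining the Wick formula above with a Soshnikov-type refinement: the dominant permutations in $\E\tr(W^{p_d})$ contribute factors asymptotic to $s^{p_d}(\sqrt c+1)^{2p_d}$, so Markov's inequality gives $\P(\lambda_1/s\geq (\sqrt c+1)^2+\varepsilon)\leq (1+\varepsilon/(\sqrt c+1)^2)^{-p_d}$, which is summable in $d$ along any subsequence and upgrades to a.s.\ by Borel--Cantelli. The matching lower bound $\liminf \lambda_1/s\geq (\sqrt c+1)^2$ is automatic from part (i), since the weak limit of $\mu(s^{-1}W)$ has $(\sqrt c+1)^2$ in its support. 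The extension to the top $k$ eigenvalues for fixed $k$ is then standard interlacing. For the lower edge when $c\leq 1$, the claim $\lambda_d/s\to 0$ is immediate from the rank bound $\mathrm{rk}(W)\leq s$. When $c>1$, I would appeal to the Bai--Yin theorem (see \cite{bs}): the smallest singular value of $G/\sqrt s$ converges almost surely to $\sqrt c-1$, which is equivalent to $\lambda_d/s\to(\sqrt c-1)^2$; the extension to the $k$-th smallest eigenvalue again follows from interlacing.

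The main obstacle is the lower edge in the regime $c>1$: the moment method is naturally suited to the top of the spectrum but insensitive to the hard edge, because a single outlier near $0$ barely affects $\tr(W^p)$. The cleanest route is to cite the Bai--Yin result rather than re-prove it; an alternative is to run a moment argument on the inverse Wishart, but controlling $\E\tr((W/s)^{-p})$ requires substantially more delicate combinatorics than the formulas developed in Section \ref{sec:combinatorics-Wishart} provide. A secondary technical point is promoting the moment-method bound for $\lambda_1$ from \emph{in probability} to \emph{almost sure}, which forces one to choose $p_d$ large enough that the tail probabilities are summable, while still small enough that the asymptotic moment analysis remains valid.
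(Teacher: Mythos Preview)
The paper does not prove this proposition at all: it is introduced as ``the following well-known result in random matrix theory'' and simply attributed to \cite{bs}, with no argument given. So there is no ``paper's own proof'' to compare against; your proposal is strictly more than what the paper does.

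That said, your sketch is sound and is in fact well aligned with the combinatorics of Section~\ref{sec:combinatorics-Wishart}. The moment computation for part~(i) is correct: the exponent of $d$ is indeed $-2g(\alpha)$, and Biane's bijection yields $\sum_{\pi\in NC(p)} c^{\#\pi-p}$, which is the $p$-th moment of $\pi_c$ in the paper's normalisation (mean~$1$). The variance bound via the connected/disconnected decomposition of the double Wick sum is the standard genus argument and gives the $O(d^{-2})$ needed for Borel--Cantelli. For the upper edge you are essentially replaying Lemma~\ref{moment-bound} and Theorem~\ref{thm:convergence-extreme} of this very paper, which is appropriate. For the lower edge with $c>1$ you still invoke Bai--Yin from \cite{bs}, so even your version is not fully self-contained; this is reasonable, since as you note the moment method does not see the bottom of the spectrum.

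Two small points. First, the rank argument for the lower edge only settles $c<1$; at $c=1$ one may have $s\geq d$ and hence $\lambda_d>0$, so you still need an analytic input (e.g.\ Bai--Yin again, or the soft-edge argument) to get $\lambda_d/s\to 0$. Second, calling the passage from $\lambda_1$ to $\lambda_k$ ``interlacing'' is a misnomer: what you actually use is that the limiting measure has positive mass in every neighbourhood of $b_c$, which forces $\liminf \lambda_k/s\geq b_c-\varepsilon$ for each fixed $k$, combined with $\lambda_k\leq\lambda_1$. The same remark applies at the bottom edge.
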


\begin{proof}[Proof of Theorem \ref{thm:threshold-p-finite}.]
Recall that a spectrum $(\lambda_1 \geq \cdots \geq \lambda_d)$
corresponds to states in $\mathcal{APPT}$ if and only if the
matrix $\Theta(\lambda;\sigma_+,
\sigma_-)=\Lambda(\lambda;\sigma_+, \sigma_-) +
\Lambda(\lambda;\sigma_+, \sigma_-)^T$ is positive for all pairs
$(\sigma_+, \sigma_-)$. Such a criterion for
APPT is invariant under scaling of the matrix $\rho$. Thus, it is
equivalent to consider the non-normalized Wishart matrix $W$.

In our case, we have that, for all $i<j$, $\sigma_+(i,j) \leq
d_1(d_1+1)/2$ and $\sigma_-(i,j) \leq d_1(d_1-1)/2$, which are
bounded quantities. Hence, it follows from Proposition
\ref{prop:RMT-Wishart} that, asymptotically, matrices
$\Theta(\lambda;\sigma_+, \sigma_-)$ are all equal to
$$\Lambda_c=
\begin{pmatrix}
2a_c & a_c-b_c &  \cdots & a_c-b_c \\
a_c-b_c & 2a_c &  \cdots & a_c-b_c \\
\vdots & \vdots & \ddots & \ \vdots \\
a_c-b_c & a_c-b_c &  \cdots & 2a_c \\
\end{pmatrix} = (a_c+b_c)\Id_{d_1} + (a_c-b_c)\mathbf{1}_{d_1}.$$
The eigenvalues of the matrix above are $(a_c+b_c) + d_1(a_c-b_c)$
with multiplicity one and $(a_c+b_c)$ with multiplicity $d_1-1$.

Note that if $c\leq 1$, then $a_c=0$ and $b_c>1$ which
implies that $(a_c+b_c) + d_1(a_c-b_c)<0$. Equivalently, the
matrix $\Lambda_c$ is not positive and $(\lambda_1, \cdots,
\lambda_d)$ does not correspond to states in $\mathcal{APPT}$. On
the other hand, if $c>1$, $\Lambda_c$ is positive if and only if
$(a_c+b_c) + d_1(a_c-b_c) \geq 0$ which can be shown to be
equivalent to the condition in the statement.
\end{proof}

Combing Theorems \ref{thm:threshold-p-infty} and \ref{thm:threshold-p-finite}, we get the following theorem; although  this statement is strictly weaker than the results above, it captures the behavior of the threshold in an unique statement.

\begin{theorem}\label{thm:threshold-p-both}
There are effectively computable absolute constants $c, C>0$, such that,  if $\rho$ is a
bipartite random induced state on $\C^d=\C^{d_1} \otimes
\C^{d_2}$, obtained by partial tracing a random pure state on
$\bC^d \otimes \C^s$, then for $d=d_1d_2$ large enough,
\begin{itemize}
\item[(i)] The random density matrix $\rho$ is {\em not} $\mathcal{APPT}$ with very large probability when $s \leq cp^2 d$;
\item[(ii)] The random density matrix $\rho$ is $\mathcal{APPT}$ with very large probability when $s \geq Cp^2 d$.
\end{itemize}
\end{theorem}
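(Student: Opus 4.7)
The plan is to obtain Theorem \ref{thm:threshold-p-both} by combining Theorems \ref{thm:threshold-p-infty} and \ref{thm:threshold-p-finite} in parallel with Theorem \ref{Threshold:APPT:existence}. The latter already furnishes the ``very large probability'' in the form of exponential concentration at a threshold scale $s_0$; my task reduces to (a) identifying $s_0 \simeq p^2 d$ uniformly in the regime of $p = \min(d_1, d_2)$, and (b) choosing numerical constants $c, C$ that are valid across all possible scalings of $p$ with $d$ (bounded $p$, $p^2 \ll d$, and $p^2 \simeq d$).

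For part (ii), I would take $C = 5$. If $p \to \infty$ along $d \to \infty$, Theorem \ref{thm:threshold-p-infty}(i) applied with $\varepsilon = 1$ gives APPT almost surely once $s \geq 5 p^2 d$. If instead $p$ stays bounded, Theorem \ref{thm:threshold-p-finite}(i) yields APPT for $s/d > (p + \sqrt{p^2-1})^2$, and since this sharp threshold is at most $4p^2 < 5 p^2$, the condition $s \geq 5 p^2 d$ again suffices. Invoking Theorem \ref{Threshold:APPT:existence}(ii) (with the identification $s_0 \simeq p^2 d$, which is the right scale for concentration) upgrades both almost-sure conclusions to the exponential bound $\P(\rho \notin \mathcal{APPT}) \lesssim \exp(-c' p^2 d)$.

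For part (i), the binding (smallest) threshold ratio arises in the regime $p^2 \sim d$, where Theorem \ref{thm:threshold-p-infty}(iii) with $\tau = 1$ gives a ratio of $4 C_1 = 64/(9\pi^2) \approx 0.72$. Selecting any $c < 64/(9\pi^2)$, for instance $c = 1/2$, is safe in every regime: for $1 \ll p^2 \ll d$ the threshold ratio is $4 - \varepsilon$ (part (ii) of the same theorem); for $p^2 \sim \tau d$ with $\tau \in (0, 1]$ it is $4 C_\tau \geq 4 C_1 > c$, using the monotonicity of $\tau \mapsto C_\tau$ recorded in the proof; and for bounded $p \geq 2$, Theorem \ref{thm:threshold-p-finite}(ii) produces a ratio of $(p + \sqrt{p^2-1})^2 / p^2 \geq (1 + \sqrt{3}/2)^2 \approx 3.48 > c$. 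Applying Theorem \ref{Threshold:APPT:existence}(i) then converts this to $\P(\rho \in \mathcal{APPT}) \lesssim \exp(-c' s)$.

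The only real obstacle is the bookkeeping necessary to exhibit a single $c > 0$ uniformly in $p$; this boils down to the observation that the infimum of $4 C_\tau$ over $\tau \in (0, 1]$ is strictly positive (equal to $64/(9\pi^2)$) and that the finite-$p$ thresholds are even larger, so no limit procedure over $p$ can drive $c$ to zero. Once these numerical estimates are verified, the statement follows immediately from the exponential concentration already established.
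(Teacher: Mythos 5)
Your proposal is correct and follows essentially the paper's own route: Theorem \ref{thm:threshold-p-both} is obtained exactly as in the paper by combining Theorem \ref{thm:threshold-p-infty} (regimes $1 \ll p^2 \ll d$ and $p^2 \sim \tau d$) with Theorem \ref{thm:threshold-p-finite} (bounded $p$), the almost-sure conclusions supplying the ``very large probability'' statement for $d$ large, and your explicit choices $C=5$ and $c=1/2 < 64/(9\pi^2)$ are consistent with the quantitative bounds in those theorems. Two minor remarks: your inequality $C_\tau \geq C_1$ relies on $\tau \mapsto C_\tau$ being \emph{decreasing} (which is what the endpoint values $C_0=1 > C_1=16/(9\pi^2)$ force, despite the paper's literal claim that the map is increasing), and the final appeal to Theorem \ref{Threshold:APPT:existence} with the identification $s_0 \simeq p^2 d$ is not needed for the statement as written and is mildly circular, since in the paper that identification (Corollary \ref{cor:mean-width}) is itself deduced from the present theorem together with Theorem \ref{Threshold:APPT:existence} --- though the circle can be closed by the contradiction argument implicit in your step (a), at the price of losing your explicit constants.
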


The above theorem asserts that the thresholds
for $\mathcal{APPT}$ is indeed (approximately) $4p^2d$. Together
with Theorem \ref{Threshold:APPT:existence}, one can obtain the
estimate for the mean width of $\mathcal{APPT}^\circ$.

\begin{corollary}\label{cor:mean-width}
Let $\mathcal{APPT}$ be the
set of states with APPT on the bipartite system
$\C^d=\C^{d_1}\otimes \C^{d_2}.$ Then the threshold function $s_0$ for $\mathcal{APPT}$ satisfies
$$s_0=s_0(d_1, d_2)\sim w(\mathcal{APPT}^\circ)^2\sim p^2d.$$ In particular,
$w(\mathcal{APPT}^\circ)\sim p \sqrt{d}$.
\end{corollary}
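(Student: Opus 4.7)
The plan is to combine the two characterizations of the $\mathcal{APPT}$ threshold that have already been established in the paper: the geometric one from Theorem \ref{Threshold:APPT:existence}, phrased in terms of $s_0 \simeq w(\SEP^\circ)^2$, and the explicit one from Theorem \ref{thm:threshold-p-both}, which pins the phase-transition scale at $p^2d$. A short consistency argument forces the two to agree up to absolute constants, and the mean-width estimate then follows immediately.

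First I would recall the geometric content of Section \ref{sec:properties-APPT}. The parameter $s_0=s_0(d_1,d_2)$ is defined directly in terms of the Gaussian mean width of $\SEP^\circ$, and using the relation $w_G(\SEP^\circ)=\gamma_N w(\SEP^\circ)$ with $\gamma_N \simeq \sqrt{N}$ and $N=d^2-1$, together with \eqref{sym-upper}, the paper already observes $s_0 \simeq w(\SEP^\circ)^2 \simeq w(\K^\circ)^2$. Since $\mathcal{APPT}$ is nothing but the translate $\SEP+\Id/d$ and contains $\Id/d$ in its interior, the natural polar body of $\mathcal{APPT}$ (taken after translating the origin to the maximally mixed state) coincides with $\SEP^\circ$, so $s_0 \simeq w(\mathcal{APPT}^\circ)^2$. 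This yields the first ``$\sim$'' in the statement.

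Next, to identify the order of $s_0$ I would run a consistency argument between the two threshold statements. Theorem \ref{Threshold:APPT:existence} guarantees a sharp bidirectional transition at the scale $s_0$: $s \geq Cs_0$ forces $\rho \in \mathcal{APPT}$ with very large probability, while $s \leq cs_0$ forces $\rho \notin \mathcal{APPT}$ with very large probability. Theorem \ref{thm:threshold-p-both} proves exactly the same bidirectional phase transition at the scale $p^2d$. If one had $s_0/(p^2d)\to 0$ along some sequence of dimensions, one could choose $s$ between these two scales such that Theorem \ref{Threshold:APPT:existence}(ii) forces $\rho\in\mathcal{APPT}$ whp while Theorem \ref{thm:threshold-p-both}(i) forces $\rho\notin\mathcal{APPT}$ whp, a contradiction. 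The opposite regime $p^2d=o(s_0)$ is ruled out symmetrically. Hence $s_0 \simeq p^2d$, producing the second ``$\sim$''.

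Combining the two steps yields $w(\mathcal{APPT}^\circ)^2 \simeq p^2d$ and, extracting a square root, $w(\mathcal{APPT}^\circ) \simeq p\sqrt{d}$. No serious obstacle arises here; the analytic work has already been done in Theorems \ref{Threshold:APPT:existence} and \ref{thm:threshold-p-both}, and this corollary merely repackages their content into a clean geometric form. The only point that deserves a sentence of attention — really a bookkeeping point — is checking that translating $\mathcal{APPT}$ into $H_0$ preserves the order of the mean width of the polar body, which follows from \eqref{sym-upper} via the Rogers--Shephard-type comparison between $\SEP$, $-\SEP$ and $\K$.
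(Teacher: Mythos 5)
Your proposal is correct and follows essentially the same route as the paper, which obtains the corollary precisely by combining the geometric threshold of Theorem \ref{Threshold:APPT:existence} (at scale $s_0\simeq w(\SEP^\circ)^2$) with the explicit threshold of Theorem \ref{thm:threshold-p-both} (at scale $p^2d$); your consistency argument merely spells out the comparison the paper leaves implicit. The identification $w(\mathcal{APPT}^\circ)\simeq w(\SEP^\circ)$ via the translation to $H_0$ and \eqref{sym-upper} is also exactly how the paper sets things up.
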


\noindent {\bf Acknowledgment.}
The research of the three authors was supported by NSERC Discovery grants
and an ERA at the University of Ottawa.
The research of BC was supported by the ANR Granma. The research of IN was supported by a PEPS grant from the Institute of Physics of the CNRS and by a travel grant APC of the University of Toulouse. IN acknowledges the hospitality of the University of Ottawa, where part of this work was done.
The research of DY has been
initiated with support from the Fields Institute, the NSERC
Discovery Accelerator Supplement Grant \#315830 from Carleton
University,
and completed while supported by a start-up grant from the
Memorial University of Newfoundland.

\vskip 2mm \noindent  Beno{i}t Collins, {\small \em Dept. of
Mathematics and Statistics, University of Ottawa, ON, Canada and CNRS, Institut Camille Jordan, Universit\'{e} Lyon 1, France.} \\
{\small\tt Email: bcollins@uottawa.ca}

\vskip 2mm \noindent  Ion Nechita, {\small \em  CNRS, Laboratoire de Physique Th{\'e}orique , IRSAMC, Universit{\'e} de Toulouse, UPS, F-31062 Toulouse, France.}\\
{\small \tt E-mail: nechita@irsamc.ups-tlse.fr}

\vskip 2mm \noindent Deping Ye, {\small \em Dept. of Mathematics
and Statistics, Memorial University of Newfoundland,
   St. John's, NL, Canada.} \\
 {\small \tt Email: deping.ye@mun.ca}

\end{document}